\documentclass{lmcs}
\title{Completeness of Relational Algebra via Cylindric Algebra}
\author{Jan Laštovička \lmcsorcid{0000-0002-0403-9709}}

\address{Department of Computer Science, Faculty of Science,
Palacky University Olomouc,
Czech Republic}	
\email{jan.lastovicka@upol.cz}

\def\tu#1{( #1)}
\def\C{\textrm{C}}
\def\S{\textrm{S}}
\def\D{\textrm{D}}
\def\A{\textrm{A}}
\def\R{\mathcal R}
\def\T{\mathcal T}
\def\X{\mathcal X}
\def\F{\mathcal F}
\def\V{\mathcal V}
\def\norm{norm}
\def\expr{expr}
\def\expri{\expr_\textrm{i}}
\def\eq{eq}
\def\coeq{coeq}
\def\al{align}
\def\id{\textrm{id}}

\def\Eq{{\textrm{Eq}}}
\def\FV{\textrm{FV}}
\def\cl{\textrm{cl}}
\def\DEE{\textrm{DEE}}

\def\gen{gen}
\def\cogen{cogen}
\def\genz{\gen_0}
\def\cogenz{\cogen_0}

\newcommand{\st}{%
	\ensuremath{\;|\;}
}
\begin{document}

\begin{abstract}
An alternative proof of the completeness of relational algebra with respect to allowed formulas of first-order logic is presented. The proof relies on the well-known embedding of relational algebra into cylindric algebra, which makes it possible to establish completeness in a more algebraic way. Building on this proof, we present an alternative algorithm that produces a relational expression equivalent to a given allowed formula. The main motivation for the present work is to establish a proof of completeness suitable for generalisation to relational models handling incomplete or vague information.
\end{abstract}

\maketitle
\section{Introduction}
Expressions of relational algebra \cite{Codd1970} form a procedural language suitable for querying relational databases. In the search for a declarative querying language, a natural approach is to consider all first-order formulas. For practical purposes, however, a declarative query must be translated into an equivalent procedural query that can be efficiently executed by a computer. This is not possible for every formula; for instance, the formula $\neg r(x)$ has no equivalent relational expression. In the following, we restrict our attention to allowed formulas \cite{Topor1987, GelderTopoer:safety}, which possess the desired property and are defined by a simple, easily verifiable structural condition.

Here, by the completeness of relational algebra, we refer to the theorem asserting that each allowed formula has a semantically equivalent relational expression, and conversely, each relational expression has a semantically equivalent allowed formula. As explained, we further require that for each allowed formula, a semantically equivalent relational expression can be constructed, which can then be efficiently evaluated. In \cite{GelderTopoer:safety}, the completeness theorem was proved, and an algorithm for converting allowed formulas into relational expressions was presented.

Cylindric algebras  \cite{HMT1971,HMTAN1981} were introduced by Tarski for the algebraization of first-order logic. Imieliński and Lipski \cite{ImLi1982} showed how a relational algebra can be naturally embedded into a cylindric algebra. By leveraging this embedding and partially following the approach of the original proof of completeness, we derive an alternative proof of the completeness theorem. 

The primary advantage of our alternative proof is that it is carried out within an appropriate cylindric algebra. This makes the proof technique applicable to the generalized relational models as well. For example, it is desirable to establish analogous completeness results for the relational model with incomplete information \cite{La2025} or for models incorporating similarity of values in domains \cite{BeVy:Rsbmd1}. To this end, cylindric algebras can be generalized to accommodate these cases, and corresponding lemmas and theorems can then be established, as shown in the following sections. 

Compared with \cite{GelderTopoer:safety}, our approach benefits from treating negation as a full-fledged logical connective, which makes it possible to define allowed formulas more clearly. Another advantage is that we permit equality in formulas, so there is no need to eliminate it before processing. Our algorithm for converting an allowed formula into an equivalent relational expression works in two steps. First, the allowed formula is transformed into a normalized formula, after which it is straightforwardly converted into a relational expression. The first normalization step preserves the structure of the input formula, a feature that can provide practical benefits.

The article has the following structure. In the second and third sections, the necessary notions of first-order logic and relational algebra are presented. Semantic equivalence of relational expressions with formulas is studied in the fourth section. Here also normalized formulas and they translation to relational expressions are defined. The section ends with remarks on domain-independent formulas. The fifth section is dedicated to allowed formulas. Finally, the algorithm for  normalization of allowed formulas and its correctness is given in the last sixth section.

\section{First-order logic}
We begin by defining the syntax and semantic of first-order logic, selecting from the available definitions those most appropriate for our purposes.

\paragraph{Syntax.} A language of the first-order logic consists of a finite set $\R$ of \emph{relation symbols}, each $r\in \R$ with a non-negative integer $\delta(r)$ called the \emph{arity} of $r$; \emph{tautology} $1$, \emph{equality}  $\approx$; a  finite set $\X$ of \emph{variables}; \emph{logical connectives} $\neg$, $\wedge$, $\vee$; \emph{existential quantifier} $\exists$; and auxiliary symbols $($,$)$ (left and right parenthesis). Note that, contrary to common practice, we do not consider equality as a relation symbol. Another unusual feature is that the set of variables is finite. Each language is determined by $\R$, $\delta$, and $\X$. The triple $\tu{\R,\delta, \X}$ is called a \emph{type}.

A \emph{formula} of a language of type $\tu{\R,\delta, \X}$ is inductively defined as follows:
\begin{enumerate}
\item $1$ is a formula;
\item if $r\in \R$ is a relation symbol of arity $n$, i.e., $\delta(r)=n$, and $x_1,\ldots,x_n\in\X$ are variables, then $r(x_1,\ldots,x_n)$ is a formula;
\item if $x_1$ and $x_2$ are variables, then $(x_1\approx x_2)$ is a formula;
\item if $\varphi$ is a formula, then $\neg\varphi$ is a formula;
\item if $\varphi$ and $\psi$ are formulas, then both $(\varphi\wedge\psi)$ and $(\varphi\vee\psi)$ are formulas;
\item if $\varphi$ is a formula and $x$ a variable, then $(\exists x)\varphi$ is a formula.
\end{enumerate}
Formulas given by the first three rules are called \emph{atomic formulas}, while those given by the remaining rules are called \emph{compound formulas}. Formulas constructed by the second rule are called \emph{relation formulas}, and those constructed by the third rule are called \emph{equality formulas}.

An expression $(\varphi\Rightarrow\psi)$ is an abbreviation for $(\neg\varphi\vee\psi)$, $(\varphi\Leftrightarrow\psi)$ is an abbreviation for $((\varphi\Rightarrow\psi)\wedge(\psi\Rightarrow\varphi))$, $(\forall x)\varphi$ is an abbreviation for $\neg(\exists x)\neg\varphi$, $(\varphi_1\wedge\varphi_2\wedge\cdots\wedge\varphi_n)$ is an abbreviation for $((\varphi_1\wedge\varphi_2)\wedge\cdots\wedge\varphi_n)$, similarly $(\varphi_1\vee\varphi_2\vee\cdots\vee\varphi_n)$ is an abbreviation for $((\varphi_1\vee\varphi_2)\vee\cdots\vee\varphi_n)$. We usually do not include the outermost parenthesis in formulas.

We assume that variables in $\X$ are linearly oredered. This order is used --- as the next definition demonstrates --- whenever we need to consider  variables in a specific order.  For $\{x_1,\ldots,x_n\}=X\subseteq \mathcal X$, $(\exists X)\varphi$ is an abbreviation for $(\exists x_1)\cdots(\exists x_n)\varphi$ and $(\forall X)\varphi$ is an abbreviation for $(\forall x_1)\cdots(\forall x_n)\varphi$. 

The set $\FV(\varphi)$ of \emph{free variables} of a formula $\varphi$ is defined as follows: $\FV(1)=\emptyset$; $\FV(r(x_1,\ldots,x_n))=\{x_1,\ldots,x_n\}$; $\FV(x_1\approx x_2)=\{x_1,x_2\}$;  $\FV(\neg\varphi)=\FV(\varphi)$; $\FV(\varphi\vee\psi)=\FV(\varphi\wedge\psi)=\FV(\varphi)\cup\FV(\psi)$; $\FV((\exists x)\varphi)=\FV(\varphi)-\{x\}$.

\paragraph{Semantics.} A \emph{structure} of a language of type $\tu{\R,\delta, \X}$ is a triple $\tu{M,\approx^M, \R^M}$ consisting of a non-empty finite set $M$, called \emph{domain},  the equality relation $\approx^M$ on $M$, i.e., $\approx^M=\{\tu{m,m}\st m\in M\}$, and an indexed set $\R^M=\{(r^M)_{r\in \R}\st r^M\subseteq M^{\delta(r)}\}$ 
 of relations. Since our structures model relational databases, we restrict our attention to finite domains. A given structure $\tu{M,\approx^M,\R^M}$ is also denoted simply by $M$.  For the remainder of the paper, we fix a language of type $\tu{\R,\delta, \X}$.
 
 Let $M$ be a structure.  A mapping $v:\!\X\to M$ that assigns to each variable $x$ an element $v(x)\in M$ is called an \emph{$M$-valuation}. When $M$ is specified by the context, $v$ is simply called  a \emph{valuation}.  For a variable $x$ and valuations $u$ and $v$, we write $u=_xv$ if $u(y)=v(y)$ for each variable $y$ distinct from $x$. For a set of variables $X=\{x_1,\ldots,x_n\}$, we write $u=_Xv$ if there exists valuations $v_1,\ldots, v_{n+1}$ such that $v_i=_{x_i}v_{i+1}$ for each $1\leq i\leq n$, $v_1=v$, and $v_{n+1}=u$.
 
We do not define when a formula is true in a structure with respect to a valuation. Rather,  for a given formula and structure, we define the set of all valuations for which the formula is true. We begin with a few operations that will be useful for this purpose. 
 
 When $M$ is specified by the context, the set $M^\X$ of all $M$-valuations is denoted by $\V$. For a set $V\subseteq\V$, we denote its complement relative to $\V$ by $\overline V$, i.e. $\overline{V}=\V\setminus V$.   
For a variable $x$, we define operator $\C_x\!:\V\to \V$ called \emph{cylindrification} by
\begin{align}
\C_x(V)=\{v\in\V\st \text{there is }u\in V\text{ such that } u=_xv\}.
\end{align}
Clearly, cilindrifications are closure operators.

\begin{lem}\label{lem:Cx_overline}
The following holds.
\begin{enumerate}
\item $\overline{\C_x(V)}\subseteq\overline{V}$,
\item $\C_x(\overline{\C_x(V)})= \overline{\C_x(V)}$,
\item $\overline{\C_x(V)}\subseteq \C_x(\overline V)$.
\end{enumerate}
\end{lem}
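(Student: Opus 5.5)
The plan is to argue directly from the definition of $\C_x$, using three basic facts about cylindrification: it is extensive, it is idempotent, and the relation $=_x$ is an equivalence relation on $\V$. Extensivity, $V\subseteq\C_x(V)$, holds because $v=_xv$. Idempotency, $\C_x(\C_x(V))=\C_x(V)$, follows from transitivity of $=_x$. Symmetry and transitivity of $=_x$ are immediate, since $u=_xv$ says exactly that $u$ and $v$ agree off $x$. These are the facts behind the remark that cylindrifications are closure operators, so I will take them as given and only recall them briefly.

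For item (1), I will take complements in $V\subseteq\C_x(V)$, which reverses the inclusion.

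For item (2), extensivity already gives the inclusion $\overline{\C_x(V)}\subseteq\C_x(\overline{\C_x(V)})$. For the converse, I take $v\in\C_x(\overline{\C_x(V)})$, so there is $u\notin\C_x(V)$ with $u=_xv$. Suppose, for contradiction, that $v\in\C_x(V)$. Then some $w\in V$ satisfies $w=_xv$, and by symmetry and transitivity $w=_xu$. This puts $u\in\C_x(V)$, a contradiction. Hence $v\in\overline{\C_x(V)}$. Equivalently, one can say that $\C_x(V)$ is a union of $=_x$-classes, hence so is its complement, and cylindrification fixes any union of classes.

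For item (3), I combine (1) with monotonicity of $\C_x$, which is clear from the definition. By (1), $\overline{\C_x(V)}\subseteq\overline V$, so $\C_x(\overline{\C_x(V)})\subseteq\C_x(\overline V)$. By (2) the left-hand side equals $\overline{\C_x(V)}$, which gives (3).

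I expect no real obstacle here. The only step that needs care is the converse inclusion in (2), where the symmetry of $=_x$ must be invoked explicitly.
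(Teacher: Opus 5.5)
Your proposal is correct and follows essentially the same route as the paper: (1) by complementing $V\subseteq\C_x(V)$, (2) by extensivity plus an argument that an $=_x$-neighbour of a valuation outside $\C_x(V)$ also lies outside $\C_x(V)$, and (3) by combining (1), (2) and monotonicity of $\C_x$. The only difference is cosmetic: in (2) you use symmetry and transitivity of $=_x$ in a proof by contradiction, whereas the paper argues directly via a variable $y\neq x$ on which the two valuations differ.
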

\begin{proof} ~\newline
\begin{enumerate}
\item This follows from $V\subseteq\C_x(V)$.
\item Clearly, $\C_x(\overline{\C_x(V)})\supseteq\overline{\C_x(V)}$. Let $v\in\C_x(\overline{\C_x(V)})$. Then there is $v'\in\V$ such that $v'\in\overline{\C_x(V)}$ and $v'=_xv$. For each $v''\in V$, we have $v''\not=_xv'$. Let $v''\in V$ be such that $v''\not=_xv'$. Then there is a variable $y\in\X$ distinct from $x$ with $v''(y)\not=v'(y)$.  But since $v'(y)=v(y)$, we have $v''\not=_xv$. Hence, $v\in\overline{\C_x(V)}$.
\item From (1) and (2), it follows that $\overline{\C_x(V)}=\C_x(\overline{\C_x(V)})\subseteq\C_x(\overline{V})$.  \qedhere
\end{enumerate}

\end{proof}

For variables $x,y\in \X$, we define set $\D_{xy}\subseteq\V$ called \emph{diagonal} by
\begin{align}
\D_{xy}=\{v\in \V\st v(x)=v(y)\}. 
\end{align}

The relation $||\varphi||_M$, called the \emph{value} of $\varphi$ in $M$, is defined as follows:
\begin{enumerate}
\item $||1||_M=\V$;
\item $||r(x_1,\ldots,x_n)||_M=\{v\in\V\st \tu{v(x_1),\ldots,v(x_n)}\in r^M \}$;
\item $||x\approx y||_M=\D_{xy}$;
\item if $\varphi$ is a formula, then $||\neg\varphi||_M=\overline{||\varphi||_M}$;
\item if $\varphi$ and $\psi$ are formulas, then
\begin{align*}
||\varphi\wedge\psi||_M&= ||\varphi||_M\cap||\psi||_M,\\
||\varphi\vee\psi||_M&= ||\varphi||_M\cup||\psi||_M;
\end{align*}
\item if $\varphi$ is a formula and $x$ is a variable, then
\begin{align*}
||(\exists x)\varphi||_M=C_x(||\varphi||_M).
\end{align*}
\end{enumerate}
 
If $M$ is clear from the context, we write $||\varphi||$ instead of $||\varphi||_M$. It is easy to see that $\C_x(||\varphi||)=||\varphi||$, whenever $x\notin\FV(\varphi)$.

For a set $X=\{x_1,\ldots, x_n\}$, we define an unary operation $\C_S$ on $\V$ by
\begin{align}
\C_S(V)=\C_{x_1}(\cdots \C_{x_n}(V)\cdots). 
\end{align}
Clearly, it holds $||(\exists X)\varphi||=\C_X(||\varphi||)$.

Although it is not essential for the following discussion, it is interesting to note that the semantics of first-order logic can be given by a specific algebra. \emph{Cylindric algebra of dimension $I$} is an algebra structure
 $$\tu{A,\vee,\wedge,',0,1,c_i,d_{ij}}, \quad i,j\in I,$$
 such that $0,1$, and $d_{ij}$, for $i,j\in I$, are elements of $A$; $'$ and $c_i$, for $i\in I$, are unary operations; $\vee$  and $\wedge$ are binary operations; and the following conditions are satisfied for any $a,b\in A$ and any $i,j,k\in I$: 
\begin{enumerate}
 \item $\tu{A,\vee,\wedge,',0, 1}$ is a Boolean algebra,
 \item $c_i(0)=0$,
 \item $a\wedge c_i(a)=a$,
 \item $c_i(a\wedge c_i(b))=c_i(a)\wedge c_i(b)$,
 \item $c_i(c_j(a))=c_j(c_i(a))$,
 \item $d_{ii}=1$,
 \item if $i\neq j$ and $i\neq k$, then $d_{jk}=c_i(d_{ji}\wedge d_{ik})$,
 \item if $i\neq j$, then $c_i(d_{ij}\wedge a)\wedge c_i(d_{ij}\wedge a')=0$.
\end{enumerate}
It is easy to verify that 
\begin{align}
\tu{2^{\V},\cup,\cap,\overline{\,\cdot\,},\emptyset,\V,\C_x,\D_{xy}}, \quad x,y\in \X 
\end{align}
is a cylindric algebra of dimension $\X$.

\section{Relational algebra}
Let $M$ be a structure. \emph{Relation schemes} are subsets of $\X$. Let $S$ be a relation scheme. A \emph{tuple $t$ over $S$} is a mapping $t\!:S\to M$, i.e., $t\in M^S$. For a tuple $t$ over $S$ and a relation scheme $S'\subseteq S$, the restriction $t|_{S'}=\{\tu{y,t(y)}\st y\in S'\}$ of $t$ to $S'$ is a tuple over $S'$. A \emph{relation $T$ over $S$} is any set of tuples over $S$, i.e., $T\subseteq M^S$. Note that, since $M$ and $S$ are finite sets, $T$ is also finite. 

Let $T_1$ and $T_2$ be relations over $S_1$ and $S_2$, respectively. Then the \emph{join} of $T_1$ and $T_2$ is the relation $T_1\bowtie T_2$ over $S_1\cup S_2$ defined by
\begin{align}
T_1\bowtie T_2=\{t\in M^{S_1\cup S_2}\st t|_{S_1}\in T_1\textrm{ and } t|_{S_2}\in T_2\}.
\end{align}
Let $T$ be a relation over $S$, and let $S'$ be a relation scheme that is a subset of $S$. Then the \emph{projection} of $T$ onto $S'$ is the relation $\pi_{S'}(T)$ over $S'$ defined by
\begin{align}
\pi_{S'}(T)=\{t|_{S'}\st t\in T\}. 
\end{align}
Let $T$ be a relation over $S$, and let $x_1$ and $x_2$ be variables from the relation scheme $S$. Then the \emph{restriction} of $T$ by the equality of $x_1$ and $s_2$ is the relation $\sigma_{x_1\approx x_2}(T)$ over $S$ defined by
\begin{align*}
\sigma_{x_1\approx x_2}(T)=\{t\in T\st t(x_1)=t(x_2)\}.
\end{align*}
Let $T$ be a relation over $S$, and let $x$ and $y$ be variables such that $x\in S$ and $y\notin S$. Then the \emph{renaming} in $T$ variable $x$ by $y$ is the relation $\rho_{y\leftarrow x}(T)$ over $S'=(S\setminus\{x\})\cup\{y\}$ defined by
\begin{align*}
\rho_{y\leftarrow x}(T)=\{\rho_{y\leftarrow x}(t)\st t\in T\}, 
\end{align*}
where $\rho_{y\leftarrow x}(t)$ is tuple over $S'$ given by
\[
\rho_{y\leftarrow x}(t)(z) =
\begin{cases} 
t(x)\quad&\textrm{ if }z=y,\\
t(z) &\textrm{ otherwise}.\\
\end{cases}
\]

In their 1984 article \cite{ImLi1982}, Imieliński and Lipski establish a close connection between relational algebras and cylindric algebras for structures. We begin by defining a natural mapping that assigns a set of valuations to a relation. If a structure $M$ is given by the context, we denote by $\T$ the set of all relations over arbitrary schemes; that is, $\T=\{T\subseteq M^S\st S\subseteq \X\}$. We define a mapping $\epsilon\!:\T\to\V$ given by
\begin{align}
\epsilon(T)=\{v\in\V\st v|_S\in T\}. 
\end{align}
The following theorem, taken from \cite{ImLi1982}, establishes the connection.
\begin{thm}\label{thm:imli}
Let $T$, $T_1$, and $T_2$ be relations over $S$, $S_1$, and $S_2$, respectively. Then the following hold:
\begin{enumerate}
\item  $\epsilon(T_1\cup T_2)=\epsilon(T_1)\cup \epsilon(T_2)$, if $S_1=S_2$;
\item  $\epsilon(T_1\setminus T_2)=\epsilon(T_1)\cap \overline{\epsilon(T_2)}$, if $S_1=S_2$;
\item $\epsilon(T_1\bowtie T_2)=\epsilon(T_1)\cap \epsilon(T_2)$;
\item $\epsilon(\pi_{S'}(T))=\C_{S\setminus S'}(\epsilon(T))$, where $S'\subseteq S$;
\item $\epsilon(\sigma_{x\approx y}(T))=\epsilon(T)\cap\D_{xy}$, where $x,y\in S$;
\item $\epsilon(\rho_{y\leftarrow x}(T))=\C_x(\epsilon(T)\cap\D_{xy})$.
\end{enumerate}
 \end{thm}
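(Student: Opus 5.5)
The plan is to verify each of the six identities directly, by elementwise double inclusion, from the definitions of $\epsilon$, the relational operations and $\C_x$, $\D_{xy}$. Everything rests on one observation: membership $v\in\epsilon(T)$ depends only on $v|_S$. So whenever two valuations agree on $S$, either both lie in $\epsilon(T)$ or neither does. I will also use that every tuple $t\in M^S$ extends to some valuation $v$ with $v|_S=t$, which holds because $M$ is non-empty.

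Items (1)--(3) and (5) are pure Boolean bookkeeping.
\begin{itemize}
\item For (1) and (2), with $S_1=S_2=S$, we have $v|_S\in T_1\cup T_2$ iff $v|_S\in T_1$ or $v|_S\in T_2$, and similarly for the difference. The complement $\overline{\epsilon(T_2)}$ consists exactly of the valuations with $v|_S\notin T_2$.
\item For (3), I use $(v|_{S_1\cup S_2})|_{S_i}=v|_{S_i}$. By the definition of the join, $v|_{S_1\cup S_2}\in T_1\bowtie T_2$ iff $v|_{S_1}\in T_1$ and $v|_{S_2}\in T_2$.
\item For (5), since $x,y\in S$, the condition $t(x)=t(y)$ for $t=v|_S$ is just $v(x)=v(y)$, that is, $v\in\D_{xy}$.
\end{itemize}

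For (4), I first need a lemma about $=_X$ for a finite set $X$. The claim is that $u=_Xv$ holds iff $u(z)=v(z)$ for all $z\notin X$, and consequently $\C_X(V)=\{v\st \exists u\in V,\ u=_Xv\}$.
\begin{itemize}
\item Forward direction: along the chain $v_1,\dots,v_{n+1}$ only the variables $x_i\in X$ can change.
\item Backward direction: change $v$ into $u$ one variable of $X$ at a time, setting $v_{i+1}$ equal to $v_i$ except at $x_i$, where it takes the value $u(x_i)$.
\item The description of $\C_X$ then follows by induction on the nested cylindrifications.
\end{itemize}
With $X=S\setminus S'$, the inclusion "$\subseteq$" goes as follows. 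If $v|_{S'}=t|_{S'}$ with $t\in T$, define $u$ to agree with $t$ on $S$ and with $v$ elsewhere. Then $u\in\epsilon(T)$, and $u=_Xv$ because $u$ and $v$ differ only on $S\setminus S'$. For "$\supseteq$", if $u=_Xv$ and $u|_S\in T$, then $v|_{S'}=u|_{S'}=(u|_S)|_{S'}\in\pi_{S'}(T)$. The only delicate point in the whole proof is making the chain definition of $=_X$ precise in this lemma.

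For (6), I use $y\notin S$ and $x\in S$, so $x\neq y$, and set $S'=(S\setminus\{x\})\cup\{y\}$.
\begin{itemize}
\item Take $v$ with $v|_{S'}=\rho_{y\leftarrow x}(t)$ for some $t\in T$. Let $u$ equal $v$ except $u(x)=v(y)$. Then $u=_xv$ and $u\in\D_{xy}$. Also $u|_S=t$: on $S\setminus\{x\}$ we have $u=v=t$, and $u(x)=v(y)=t(x)$. Hence $v\in\C_x(\epsilon(T)\cap\D_{xy})$.
\item Conversely, take $u=_xv$ with $u|_S\in T$ and $u(x)=u(y)$. Then $v(y)=u(y)=u(x)$, since $y\neq x$, and $v=u$ on $S\setminus\{x\}$. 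So $v|_{S'}=\rho_{y\leftarrow x}(u|_S)\in\rho_{y\leftarrow x}(T)$.
\end{itemize}
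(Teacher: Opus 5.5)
Your proof is correct. The paper gives no proof of this theorem of its own. It imports the result from Imieli\'nski and Lipski and uses it as a black box, both for Lemma \ref{lem:norm_form} and for the theorem that every relational expression has an equivalent formula. So there is no competing argument to compare against. Your elementwise double-inclusion check of the six items is the natural direct verification, and every step goes through as written.

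In (6) you also correctly supply the hypotheses $x\in S$, $y\notin S$, which the statement leaves implicit in the definition of renaming.

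The only ingredient that is more than Boolean bookkeeping is the one you isolate for (4). The paper defines $\C_X$ by nesting the operators $\C_{x_i}$ and defines $=_X$ through chains. Item (4) therefore needs the fact that $v\in\C_X(V)$ iff some $u\in V$ agrees with $v$ outside $X$. Your chain argument establishes this: in one direction only the variables of $X$ can change, and in the other you reset them one at a time. It also covers the degenerate case $X=\emptyset$ that arises when $S'=S$. The paper uses this characterization tacitly elsewhere, for instance in the proof of Lemma \ref{lem:rename}, so your lemma fills that gap too.

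One minor remark: you never actually use the observation that every tuple in $M^S$ extends to a valuation because $M$ is non-empty. Each inclusion starts from a given valuation and only modifies it on $S$ or at $x$.
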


We define syntax and semantics of relational expressions.
\paragraph{Syntax.} To align more closely with modern relational databases, we model the scheme of a database by a subset of relation formulas. A \emph{database scheme} of a language of a type $\tu{\R,\delta,\X}$ is a tuple $\tu{\R,\delta,\X,\F}$, where $\F$ is a set of atomic formulas such that, for each $n$-ary relation symbol $r\in \R$, there is exactly one atomic formula $r(x_1,\ldots,x_n)\in\F$, where the variables $x_1,\ldots,x_n$ are pairwise distinct.  

A \emph{relational expression} over a relation scheme for a database scheme  $\tu{\R,\delta,\X,\F}$ is defined inductively  as follows:
\begin{enumerate}
\item $\DEE$ is a relational expression over $\emptyset$;
\item if $r(x_1,\ldots,x_n)\in\mathcal\F$, then $r$ is a relational expression over $\{x_1,\ldots, x_n\}$;
\item if $E_1$ and $E_2$ are relational expression over $S$, then $E_1\cup E_2$ and $E_1-E_2$ are relational expression over $S$;
\item if $E_1$ and $E_2$ are relational expression over $S_1$ and $S_2$, respectively, then $R_1\bowtie R_2$ is a relational expression over $S_1\cup S_2$;
\item if $E$ is a relational expression over $S$ and $S'\subseteq S$, then $\pi_{S'}(E)$ is a relational expression over $S'$;
\item if $E$ is a relational expression over $S$ and $x,y\in S$, then $\sigma_{x\approx y}(E)$ is a relational expression over $S$;
\item if $E$ is a relational expression over $S$ and $x\in S$ and $y\in \X\setminus S$, then $\rho_{y\leftarrow x}(E)$ is a relational expression over $(S\setminus\{x\})\cup\{y\}$.
\end{enumerate}
The relation scheme of a relational expression $E$ is denoted by $\S(E)$.

\paragraph{Semantics.} Any structure of a language of type $\tu{\R,\delta, \X}$ can be called a \emph{database} over database scheme $\tu{\R,\delta,\X,\F}$.  We fix a database scheme $\tu{\R,\delta,\X,\F}$ for the rest of the paper. 

The \emph{value} of a relational expression $E$ over $S$ in a database $M$ is the relation  $||E||_M$ over $S$ defined as follows:
\begin{enumerate}
\item $||\DEE||_M=\{\emptyset\}$.
\item if $r$ is a relational expression, then
$$||r||_M=\{t\in M^S\st \tu{t(x_1),\ldots,t(x_n)}\in r^M \},$$
 where $x_1,\ldots x_n$ are variables such that $r(x_1,\ldots,x_n)\in\F$ and $S=\{x_1,\ldots,x_n\}$;
\item if $E_1$ and $E_2$ are relational expressions over $S$, then
\begin{align*}
||E_1\cup E_2||_M=||E_1||_M\cup||E_2||_M,\\
||E_1-E_2||_M= ||E_1||_M\setminus||E_2||_M;
\end{align*}
\item if $E_1$ and $E_2$ are relational expressions, then
\begin{align*}
||E_1\bowtie E_2||_M= ||E_1||_M\bowtie||E_2||_M;
\end{align*}
\item if $E$ is a relational expression over $S$ and $S'\subseteq S$, then 
\begin{align*}
||\pi_{S'}(E)||_M= \pi_{S'}(||E||_M);
\end{align*}
\item if $E$ is a relational expression over $S$ and $x,y\in S$, then 
\begin{align*}
||\sigma_{x_1\approx x_2}(E)||_M=\sigma_{x_1\approx x_2}(||E||_M);
\end{align*}
\item if $E$ is a relational expression over $S$, $x\in S$ and $y\in \X\setminus S$, then 
\begin{align*}
||\rho_{y\leftarrow x}(E)||_M=\rho_{y\leftarrow x}(||E||_M).
\end{align*}
\end{enumerate}

\section{Semantic equivalence}\label{sec:sem_eq}
We begin with the definitions of semantic equivalence for formulas and expressions.
Formulas $\varphi_1$ and $\varphi_2$ are \emph{semantically equivalent} if $||\varphi_1||_M=||\varphi_2||_M$ for every structure $M$. Expressions $E_1$ and $E_2$ are \emph{semantically equivalent} if $||E_1||_M=||E_2||_M$ for every database $M$. A relational expression $E$ is \emph{semantically equivalent} to a formula $\varphi$ if
\begin{align}
\epsilon(||E||_M)=||\varphi||_M 
\end{align}
for every database $M$.
\begin{thm}
Every relational expression  is semantically equivalent to a formula.  
\end{thm}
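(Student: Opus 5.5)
We proceed by structural induction on the relational expression $E$. For each $E$ we construct a formula $\varphi_E$ with $\FV(\varphi_E)\subseteq\S(E)$ and $\epsilon(||E||_M)=||\varphi_E||_M$ for every database $M$. Carrying the free-variable invariant is convenient: it guarantees that $||\varphi_E||_M$ depends only on the values of variables in $\S(E)$, so $\C_x(||\varphi_E||)=||\varphi_E||$ whenever $x\notin\S(E)$.

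The base cases are immediate. For $\DEE$ we take $\varphi_\DEE=1$. Every valuation restricted to $\emptyset$ is the empty tuple, so $\epsilon(\{\emptyset\})=\V=||1||$. For a relation symbol $r$ with $r(x_1,\ldots,x_n)\in\F$ we take $\varphi_r=r(x_1,\ldots,x_n)$. Then $v\in\epsilon(||r||)$ iff $v|_S\in||r||$ iff $\tu{v(x_1),\ldots,v(x_n)}\in r^M$, which holds iff $v\in||r(x_1,\ldots,x_n)||$.

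The inductive steps each invoke the corresponding item of Theorem~\ref{thm:imli}:
\begin{enumerate}
\item $\varphi_{E_1\cup E_2}=\varphi_{E_1}\vee\varphi_{E_2}$, by item (1).
\item $\varphi_{E_1- E_2}=\varphi_{E_1}\wedge\neg\varphi_{E_2}$, by item (2) together with $||\neg\psi||=\overline{||\psi||}$.
\item $\varphi_{E_1\bowtie E_2}=\varphi_{E_1}\wedge\varphi_{E_2}$, by item (3).
\item $\varphi_{\pi_{S'}(E)}=(\exists (S\setminus S'))\varphi_E$, by item (4) and $||(\exists X)\varphi||=\C_X(||\varphi||)$.
\item $\varphi_{\sigma_{x\approx y}(E)}=\varphi_E\wedge(x\approx y)$, by item (5).
\item $\varphi_{\rho_{y\leftarrow x}(E)}=(\exists x)(\varphi_E\wedge(x\approx y))$, by item (6), since $||(\exists x)(\varphi_E\wedge x\approx y)||=\C_x(||\varphi_E||\cap\D_{xy})$.
\end{enumerate}
In each case the free variables stay inside the new scheme. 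For renaming, $\FV\subseteq(S\cup\{y\})\setminus\{x\}$ because $x\ne y$, as $y\notin S\ni x$.

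The main obstacle is not the induction itself, which is mechanical once Theorem~\ref{thm:imli} is available. It is checking that the hypotheses of that theorem match the syntactic side conditions of relational expressions: equal schemes for $\cup$ and $-$, $S'\subseteq S$ for projection, $x,y\in S$ for selection, and $x\in S$, $y\notin S$ for renaming. These are exactly the conditions guaranteed by the formation rules, so the theorem applies verbatim. I would also double-check item (6) against the definitions, since it is where the invariant $\FV(\varphi_E)\subseteq S$ matters. Because $y\notin S$, the set $\epsilon(||E||)$ is cylindrical in $y$. Intersecting it with $\D_{xy}$ and cylindrifying $x$ therefore transfers the $x$-value to $y$, which is precisely the renamed relation. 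If needed, I would verify this directly rather than rely on the cited theorem.
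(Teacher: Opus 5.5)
Your proof is correct and is the same argument the paper compresses into one line: structural induction on $E$, with each constructor handled by the matching clause of Theorem~\ref{thm:imli} together with the semantic rules for formulas. The free-variable invariant you carry is true but never used. The identity $\epsilon(||E||_M)=||\varphi_E||_M$ already makes the value cylindrical in every variable outside the scheme of $E$, including $y$ in the renaming step.
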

\begin{proof}
Follows directly from Theorem \ref{thm:imli} and semantics rules for formulas.
\end{proof}
On the other hand, there are formulas with no semantically equivalent relational expressions. For example, there is no relational expression that is semantically equivalent to a formula $\neg\varphi$, where $\varphi$ is any relation formula.

The clauses of Theorem~\ref{thm:imli} define a subset of formulas for which semantically equivalent relational expressions are obtained directly. \emph{Normalized formulas} are inductively defined by the following rules:
\begin{enumerate}
\item $1$ is normalized;
\item if $\varphi\in\F$, then $\varphi$ is normalized;
\item if $\varphi$ is normalized, $x\in\FV(\varphi)$, and $y$ is any variable, then $\varphi\wedge(x\approx y)$ is normalized;
\item if $\varphi$ and $\psi$ are normalized with $\FV(\varphi)=\FV(\psi)$, then $\varphi\wedge\neg\psi$ and $\varphi\vee\psi$ are normalized;
\item if $\varphi$ and $\psi$ are normalized, then $\varphi\wedge\psi$ is normalized;
\item if $\varphi$ is normalized and $x$ is a variable, then $(\exists x)\varphi$ is normalized.
\end{enumerate}
A normalized formula $\varphi$ can be easily translated to a semantically equivalent relational expression $\expr(\varphi)$ by the following rules:
\begin{enumerate}
\item $\expr(1)=\DEE$;
\item if $\varphi$ is of the form $r(x_1,\ldots,x_n)\in\F$, then $\expr(\varphi)=r$;
\item if $\varphi$ is of the form $\varphi_1\wedge(x\approx y)$ and $x,y\in\FV(\varphi_1)$, then $\expr(\varphi)=\sigma_{x\approx y}(\expr(\varphi_1))$;
\item if $\varphi$ is of the form $\varphi_1\wedge(x\approx y)$ and $x\in\FV(\varphi_1)$ and $y\notin\FV(\varphi_1)$, then $\expr(\varphi)=\sigma_{x\approx y}(E_1\times\rho_{y\leftarrow x}(\pi_{\{x\}}(E_1)))$, where $E_1=\expr(\varphi_1)$;
\item if $\varphi$ is of the form $\varphi_1\wedge\neg\varphi_2$, then $\expr(\varphi)=\expr(\varphi_1)-\expr(\varphi_2)$;
\item if $\varphi$ is of the form $\varphi_1\vee\varphi_2$, then $\expr(\varphi)=\expr(\varphi_1)\cup\expr(\varphi_2)$;
\item if $\varphi$ is of the form $\varphi_1\wedge\varphi_2$ and $\varphi_2$ is not of the form $\neg\varphi_3$, then $\expr(\varphi)=\expr(\varphi_1)\bowtie\expr(\varphi_2)$;
\item if $\varphi$ is of the form $(\exists x)(\varphi_1\wedge(x\approx y))$, then $\expr(\varphi)=\rho_{y\leftarrow x}(\expr(\varphi_1))$;
\item if $\varphi$ is of the form $(\exists x)\varphi_1$ and $\varphi_1$ is not of the form  $\varphi_2\wedge(x\approx y)$, then  $\expr(\varphi)=\pi_{\FV(\varphi_1)-\{x\}}(\expr(\varphi_1))$.
\end{enumerate}

\begin{lem}\label{lem:norm_form}
Each normalized formula $\varphi$ is semantically equivalent to the relational expression $\expr(\varphi)$. 
\end{lem}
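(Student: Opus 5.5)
Proof plan: structural induction on the normalized formula $\varphi$, following the clauses defining $\expr$. We fix an arbitrary database $M$ and prove simultaneously that $\expr(\varphi)$ is a relational expression over $\FV(\varphi)$ and that $\epsilon(||\expr(\varphi)||_M)=||\varphi||_M$. Carrying the scheme claim along is essential: it ensures that the side conditions of the relational operators ($S_1=S_2$ for $\cup$ and $-$, $x\in S$, $y\notin S$ for $\rho$, $S'\subseteq S$ for $\pi$) are met whenever the corresponding formula clause applies.

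\emph{Base cases.} For $1$, $\epsilon(\{\emptyset\})=\V=||1||$, since every $v$ restricts to $\emptyset$. For $r(x_1,\ldots,x_n)\in\F$, the two definitions coincide directly.

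\emph{Cases handled by Theorem~\ref{thm:imli}.} Each of these is an immediate application of the induction hypothesis.
\begin{itemize}
\item For $\varphi_1\wedge\neg\varphi_2$ and $\varphi_1\vee\varphi_2$ (with $\FV(\varphi_1)=\FV(\varphi_2)$), use items (2) and (1).
\item For a general conjunction, use item (3).
\item For $\varphi_1\wedge(x\approx y)$ with $x,y\in\FV(\varphi_1)$, use item (5).
\item For $(\exists x)\varphi_1$ handled by projection, use item (4), and treat the subcase $x\notin\FV(\varphi_1)$ separately. There the projection onto $\FV(\varphi_1)-\{x\}=\FV(\varphi_1)$ is the identity, and $\C_x(||\varphi_1||)=||\varphi_1||$ holds by the remark after the semantics.
\item For $(\exists x)(\varphi_1\wedge(x\approx y))$, use item (6): $\C_x(\epsilon(E_1)\cap\D_{xy})$. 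This needs $x\in\FV(\varphi_1)$ and $y\notin\FV(\varphi_1)$ for $\rho$ to be defined. When these conditions fail, I would read the clause as falling back to the projection clause (after translating $\varphi_1\wedge(x\approx y)$ via the selection or join clauses), so the claim still holds.
\end{itemize}

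\emph{Main obstacle: clause (4), $\varphi_1\wedge(x\approx y)$ with $x\in\FV(\varphi_1)$ and $y\notin\FV(\varphi_1)$.} Put $E_1=\expr(\varphi_1)$ and $S=\FV(\varphi_1)$. The $\times$ here is a join of expressions over disjoint schemes, since $y\notin S$. Compute via Theorem~\ref{thm:imli}:
\[
\epsilon(\rho_{y\leftarrow x}(\pi_{\{x\}}(E_1)))=\C_x(\C_{S\setminus\{x\}}(\epsilon(E_1))\cap\D_{xy}),
\]
and the full value is the intersection of this with $\epsilon(E_1)$ and $\D_{xy}$. We must show this equals $\epsilon(E_1)\cap\D_{xy}$, i.e.\ that $\epsilon(E_1)\cap\D_{xy}$ is contained in the middle factor. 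Take $v$ with $v|_S\in||E_1||$ and $v(x)=v(y)$. Then $v\in\C_{S\setminus\{x\}}(\epsilon(E_1))\cap\D_{xy}$, and $\C_x$ is extensive, so the inclusion holds. The reverse inclusion is trivial because the value is an intersection containing both $\epsilon(E_1)$ and $\D_{xy}$. I would check this element-wise rather than algebraically, since it is the only step that is not a direct rewrite.

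Finally, the scheme bookkeeping: the resulting scheme is $S\cup\{y\}=\FV(\varphi)$, and in the $\rho$ clause it is $(S\setminus\{x\})\cup\{y\}=\FV((\exists x)(\varphi_1\wedge x\approx y))$. With this checked, the induction closes.
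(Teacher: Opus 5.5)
Your proposal is correct and takes the same route as the paper, whose proof consists only of the remark that the lemma is a direct consequence of Theorem~\ref{thm:imli}, i.e.\ a structural induction pairing each translation clause with an item of that theorem. You make explicit what the paper leaves implicit: the invariant that the scheme of the translation equals $\FV(\varphi)$, the absorption argument needed for the clause with $y\notin\FV(\varphi_1)$ (which is not a literal rewrite by Theorem~\ref{thm:imli}), and a sensible repair of the renaming clause when $y\in\FV(\varphi_1)$, where $\rho_{y\leftarrow x}$ is undefined and the paper's translation genuinely has a gap.
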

\begin{proof}
A direct consequence of Theorem \ref{thm:imli}. 
\end{proof}

We show that every relation formula is semantically equivalent to a normalized formula. 
If $\varphi$ is a formula and $x,y$ are variables. Then by $(\varphi)[x/y]$ we denote the formula
\begin{align}
(\exists x)(\varphi\wedge(x\approx y)).
\end{align}
If $x_1,\ldots,x_n$ and $y_1,\ldots,y_n$ are variables, then by
\begin{align}
(\varphi)[x_1/y_1,\ldots,x_n/y_n] 
\end{align}
we denote the formula
\begin{align}
(\cdots(\varphi)[x_1/y_1]\cdots)[x_n/y_n] 
\end{align}

Let $r(x_1,\ldots,x_n)$ be a relation formula such that variables $x_1,\ldots,x_n$ are pairwise distinct, and let $y_1,\ldots,y_n$ be any variables. Then by
\begin{align}
r(x_1/y_1,\ldots,x_n/y_n), 
\end{align}
we denote the formula
\begin{align}
(r(x_1,\ldots,x_n))[x_1/z_1,\ldots,x_n/z_n,z_1/y_1,\ldots,z_n/y_n],
\end{align}
where $z_1,\ldots,z_n$ are pairwise distinct variables, such that sets $\{z_1,\ldots,z_n\}$ \\and $\{x_1,\ldots,x_n,y_1,\ldots,y_n\}$ are disjoint. We assume that $\X$ contains enough variables for choosing $z_1,\ldots,z_n$. For example, 
\begin{align*}
r(x_1/x_1,x_2/x_1)&=(r(x_1,x_2))[x_1/z_1,x_2/z_2,z_1/x_1,z_2/x_1]\\
&=(\exists z_2)((\exists z_1)((\exists x_2)((\exists x_1)(r(x_1,x_2)\wedge(x_1\approx z_1))\wedge(x_2\approx z_2))\\
&\phantom{===}\wedge(z_1\approx x_1))\wedge(z_2\approx x_1)). 
\end{align*}
The last formula can be simplified to the semantically equivalent formula $(\exists x_2)(r(x_1,x_2)\wedge(x_2\approx x_1))$. For the sake of readability, we will make similar simplifications in the examples given later in the text.

\begin{lem}\label{lem:rename}
The formulas  $r(x_1/y_1,\ldots,x_n/y_n)$ and $r(y_1,\ldots,y_n)$ are semantically equivalent. 
\end{lem}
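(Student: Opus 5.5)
We will compute the value of the nested formula directly in the cylindric set algebra $\tu{2^\V,\cup,\cap,\overline{\,\cdot\,},\emptyset,\V,\C_x,\D_{xy}}$. The basic tool is a substitution lemma for a single step $(\varphi)[x/y]$. For a valuation $v$ and variables $x,y$, write $v[x\mapsto v(y)]$ for the valuation that agrees with $v$ except that it sends $x$ to $v(y)$. For $x\neq y$ we have
\begin{align*}
\|(\varphi)[x/y]\| = \C_x(\|\varphi\|\cap\D_{xy}) = \{v\in\V \st v[x\mapsto v(y)]\in\|\varphi\|\}.
\end{align*}
To see this, note that $v\in\C_x(\|\varphi\|\cap\D_{xy})$ means there is $u=_x v$ with $u\in\|\varphi\|$ and $u(x)=u(y)$. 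Since $y\neq x$, we have $u(y)=v(y)$, so $u$ is forced to be $v[x\mapsto v(y)]$. In the construction of $r(x_1/y_1,\ldots,x_n/y_n)$, every single substitution step is $x_i/z_i$ or $z_i/y_i$. In each case the two variables are distinct, because the $z_i$ are chosen disjoint from all $x_j$ and $y_j$. So the lemma applies at every step.

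Next we iterate. Let $\psi_0=r(x_1,\ldots,x_n)$, whose value is $\{v\st \tu{v(x_1),\ldots,v(x_n)}\in r^M\}$. The value of a formula of this shape depends only on the values $v$ assigns to the listed variables.

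\begin{enumerate}
\item Applying $[x_1/z_1]$ produces the set of $v$ with $\tu{v(z_1),v(x_2),\ldots,v(x_n)}\in r^M$. This uses that the $x_i$ are pairwise distinct, so changing $x_1$ does not affect the other coordinates.
\item Inductively, after $[x_1/z_1,\ldots,x_n/z_n]$ the value is $\{v\st\tu{v(z_1),\ldots,v(z_n)}\in r^M\}$. The point here is that at step $i$ the current value depends only on $z_1,\ldots,z_{i-1},x_i,\ldots,x_n$. Redirecting $x_i$ to $z_i$ therefore replaces the $i$-th coordinate by $v(z_i)$. The earlier coordinates are untouched because $x_i\notin\{z_1,\ldots,z_{i-1}\}$, and the later ones are untouched because $x_i\notin\{x_{i+1},\ldots,x_n\}$.
\item The steps $[z_1/y_1,\ldots,z_n/y_n]$ are handled the same way. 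At step $i$ the value depends only on $y_1,\ldots,y_{i-1},z_i,\ldots,z_n$. Redirecting $z_i$ to $y_i$ affects only the $i$-th coordinate, since $z_i$ is distinct from the other $z_j$ and from every $y_j$.
\item The final value is $\{v\st\tu{v(y_1),\ldots,v(y_n)}\in r^M\}=\|r(y_1,\ldots,y_n)\|_M$ for every structure $M$. This is exactly the claim.
\end{enumerate}

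The main obstacle is bookkeeping: stating the induction invariant precisely, namely that the value after $k$ steps is $\{v\st\tu{v(w_1),\ldots,v(w_n)}\in r^M\}$ for an explicit list $w_1,\ldots,w_n$ of variables. We must then check the distinctness conditions that make the pointwise update $v[x\mapsto v(y)]$ change only one coordinate of this list. The intermediate $z_i$ are there precisely because the $y_j$ may coincide with each other or with the $x_j$. The $y_j$ may also repeat, for example $y_1=y_2$; this is harmless, since the final step only reads $v(y_i)$ and never writes to it.
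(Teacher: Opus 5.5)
Your proof is correct, but it is organised differently from the paper's.

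The paper rewrites the whole nested formula in one step as $\C_{\{z_1,\ldots,z_n\}}(\C_{\{x_1,\ldots,x_n\}}(||r(x_1,\ldots,x_n)||\cap\D_{x_1z_1}\cap\cdots\cap\D_{x_nz_n})\cap\D_{z_1y_1}\cap\cdots\cap\D_{z_ny_n})$. It then unfolds membership in this set into the existence of two intermediate valuations $v'$ and $v''$. Finally it proves the two inclusions separately, constructing $v'$ and $v''$ explicitly for the inclusion $||r(y_1,\ldots,y_n)||\subseteq||r(x_1/y_1,\ldots,x_n/y_n)||$.

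You instead isolate a one-step lemma: for $x\neq y$, $\C_x(V\cap\D_{xy})$ is the preimage of $V$ under $v\mapsto v[x\mapsto v(y)]$. You then carry an invariant through the $2n$ nested steps. This has two advantages:
\begin{itemize}
\item Every step is an exact equality, so no separate witness construction is needed for the reverse inclusion.
\item You never regroup the nested cylindrifications into batch form. The paper takes that step without comment, and it relies on axioms 4 and 5 of cylindric algebras together with the disjointness of the $z_i$ from the $x_j$ and $y_j$. Concretely, each $\D_{x_iz_i}$ must be closed under $\C_{x_j}$ for $j\neq i$, and each $\D_{z_iy_i}$ under $\C_{z_j}$ for $j\neq i$.
\end{itemize}
The paper's route is shorter and stays within the batch-cylindrification language used in the rest of the paper.

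Your bookkeeping can be simplified. The update $v\mapsto v[a\mapsto v(b)]$ turns the list $w_1,\ldots,w_n$ into the same list with every occurrence of $a$ replaced by $b$, and this holds with no distinctness assumptions. Distinctness is needed only in two places:
\begin{itemize}
\item to know that $x_i\neq z_i$ and $z_i\neq y_i$, so your one-step lemma applies;
\item to conclude that the final list is exactly $y_1,\ldots,y_n$, which uses that the $x_i$ are pairwise distinct and that the $z_i$ are pairwise distinct and avoid all $x_j,y_j$.
\end{itemize}
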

\begin{proof}
We have  $||r(x_1/y_1,\ldots,x_n/y_n)||=$
\begin{align*}
\C_{\{z_1,\ldots,z_n\}}(&\C_{\{x_1,\ldots,x_n\}}(||r(x_1,\ldots,x_n)||\cap\D_{x_1z_1}\cap\cdots\cap\D_{x_nz_n}) \\
&\cap\D_{z_1y_1}\cap\cdots\cap\D_{z_ny_n}).
\end{align*}
Let $v\in\V$. We have $v\in||r(x_1/y_1,\ldots,x_n/y_n)||$ if and only if there exists $v'\in\V$ such that $v'=_{\{z_1,\ldots,z_n\}}v$ and $v'(z_i)=v'(y_i)=v(y_i)$ for each $1\leq i\leq n$ and there exists $v''\in\V$ such that $v''=_{\{x_1,\ldots,x_n\}}v'$ and $v'(z_i)=v''(z_i)=v''(x_i)$ for each $1\leq i\leq n$ and $\tu{v''(x_1),\ldots,v''(x_n)}\in r^M$. 

Suppose $v\in||r(x_1/y_1,\ldots,x_n/y_n)||$. Then we have $v(y_i)=v'(z_i)=v''(x_i)$ for each $1\leq i\leq n$ and $\tu{v(y_1),\ldots,v(y_n)}=\tu{v''(x_1),\ldots,v''(x_n)}\in r^M$. We showed that $v\in||r(y_1,\ldots,y_n)||$.

Suppose $v\in||r(y_1,\ldots,y_n)||$. Let $v'$ be a valuation such that
\begin{align*}
v'(x)=
\begin{cases}
   v(x) & \mbox{if } x\notin\{z_1,\ldots,z_n\},\\
    v(y_i) & \mbox{if } x=z_i \mbox{ for some }1\leq i\leq n,
\end{cases} 
\end{align*}
and let $v''$ be a valuation such that
\begin{align*}
v''(x)=
\begin{cases}
   v'(x) & \mbox{if } x\notin\{x_1,\ldots,x_n\},\\
    v'(z_i) & \mbox{if } x=x_i \mbox{ for some }1\leq i\leq n.
\end{cases} 
\end{align*} 
Clearly, $v=_{\{z_1,\ldots,z_n\}}v'$ and $v'=_{\{x_1,\ldots,x_n\}}v''$. We have $v''(x_i)=v'(z_i)=v(y_i)$ for each $1\leq i\leq n$.  Therefore $\tu{v''(x_1),\ldots,v''(x_n)}\in r^M$. We showed that $v\in||r(x_1/y_1,\ldots,x_n/y_n)||$.
\end{proof}

We recapitulate known results on formulas equivalent to relational expressions; 
for further details, see, e.g., \cite{book:alice}. Two structures $M_1$ and $M_2$ \emph{differ only in domains} if 
\begin{align}
r^{M_1}=r^{M_2} 
\end{align}
for every relation symbol $r$. A formula $\varphi$ is called \emph{domain independent}  if 
\begin{align}
\pi_{\FV(\varphi)}(||\varphi||_{M_1})=\pi_{\FV(\varphi)}(||\varphi||_{M_2}),
\end{align}
for every two structures $M_1$ and $M_2$ which differs only in domains.

As the following lemma shows, there is no point in defining domain independent relational expressions.
Two databases $M_1$ and $M_2$ \emph{differ only in domains} if their structures differ only in domains.
\begin{lem}\label{lem:expr_independent}
Let $E$ be a relational expression. Then
\begin{align}
||E||_{M_1}= ||E||_{M_2}
\end{align}
for every two databases $M_1$ and $M_2$ which differ only in domains.
\end{lem}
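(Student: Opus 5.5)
The plan is structural induction on the relational expression $E$, following the inductive definition of its syntax. For each constructor I will show that if $||E_i||_{M_1}=||E_i||_{M_2}$ holds for the immediate subexpressions $E_i$, then it holds for $E$. I work directly with the relation-level semantics of expressions and never pass through $\epsilon$ or valuations. This matters because $\epsilon(||E||_M)$ does depend on the domain, since it ranges over $M^\X$.

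First, the base cases. For $\DEE$ the value is $\{\emptyset\}$ in every database, so there is nothing to prove. For $r$ with $r(x_1,\ldots,x_n)\in\F$, the value $||r||_M$ is the set of all $t\in M^S$ with $\tu{t(x_1),\ldots,t(x_n)}\in r^M$. The variables $x_i$ are pairwise distinct and $S=\{x_1,\ldots,x_n\}$, so $t$ is determined by the tuple $\tu{t(x_1),\ldots,t(x_n)}$. Also $r^{M_1}=r^{M_2}$, and each tuple of $r^{M_j}$ consists of elements of $M_j$. Hence both values equal $\{t \st \tu{t(x_1),\ldots,t(x_n)}\in r^{M_1}\}$, and they coincide.

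Next, the inductive steps. Union, difference, projection, selection $\sigma_{x\approx y}$ and renaming $\rho_{y\leftarrow x}$ are all defined purely from the input relations, without reference to the domain. So they commute with the hypothesis immediately.

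The only step that needs care is the join. By definition, $T_1\bowtie T_2$ is the set of $t\in M^{S_1\cup S_2}$ with $t|_{S_1}\in T_1$ and $t|_{S_2}\in T_2$, and this mentions $M$ explicitly. Fix databases $M_1$ and $M_2$ and put $T_i=||E_i||_{M_1}=||E_i||_{M_2}$ by the induction hypothesis. I will argue that if $t\in M_1^{S_1\cup S_2}$ satisfies both restriction conditions, then every value $t(z)$ lies in the range of some tuple of $T_1$ or of $T_2$. That is because each $z\in S_1\cup S_2$ belongs to $S_1$ or to $S_2$. Since the $T_i$ consist of tuples over $M_2$ as well, $t\in M_2^{S_1\cup S_2}$, and symmetrically in the other direction. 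So the join computed over $M_1$ equals the join computed over $M_2$.

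I expect this join case to be the main obstacle. It is the one place where the domain enters the semantics, and the argument hinges on every attribute of the result being covered by $S_1$ or $S_2$. Strictly, the base case involves the same issue implicitly: it uses the fact that relation values contain only elements actually appearing in them.
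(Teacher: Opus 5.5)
Your proof is correct and is essentially the paper's argument. The paper just says the lemma "follows directly from the definitions of relational operations." Your structural induction makes that explicit, and correctly identifies the base case $r$ and the join as the only places where $M$ appears in the semantics. Both are handled by observing that every value of a result tuple already occurs in a tuple of $r^{M_1}=r^{M_2}$ or of an operand value, and by the induction hypothesis those operand values coincide for $M_1$ and $M_2$.
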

\begin{proof}
Follows directly from the definitions of relational operations.
\end{proof}

\begin{thm}\label{thm:dom_idep}
If a formula is semantically equivalent to a relational expression, then it is domain independent. 
\end{thm}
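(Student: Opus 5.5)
The plan is to move everything to the relational side and then apply Lemma~\ref{lem:expr_independent}. Let $E$ be a relational expression over $S=\S(E)$ that is semantically equivalent to $\varphi$, so that $\epsilon(||E||_M)=||\varphi||_M$ for every database $M$. Fix two structures $M_1$ and $M_2$ that differ only in domains. By Lemma~\ref{lem:expr_independent} we have $||E||_{M_1}=||E||_{M_2}$; call this common relation $T$. Hence $||\varphi||_{M_i}=\epsilon_i(T)$ for $i=1,2$, where $\epsilon_i$ is the embedding $\epsilon$ computed over the domain of $M_i$. It remains to show
\begin{align*}
\pi_{\FV(\varphi)}(\epsilon_1(T))=\pi_{\FV(\varphi)}(\epsilon_2(T)).
\end{align*}
Throughout, a valuation is identified with a tuple over $\X$, and $\pi_{\FV(\varphi)}$ is restriction to $\FV(\varphi)$.

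\emph{Step 1 (the variables in $S$).} By definition, $\epsilon_i(T)=\{v\st v|_S\in T\}$, so membership of a valuation depends only on its restriction to $S$. Therefore, for any $W\subseteq S$, a tuple $t$ over $W$ lies in $\pi_W(\epsilon_i(T))$ if and only if $t\in\pi_W(T)$. This uses only the fact that $T$ is a relation over $S$ and that the coordinates outside $S$ can be filled with arbitrary elements of the (non-empty) domain. Consequently, on $W=\FV(\varphi)\cap S$ both projections coincide with $\pi_{\FV(\varphi)\cap S}(T)$, which is independent of $i$.

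\emph{Step 2 (the variables in $\FV(\varphi)\setminus S$).} This is the step I expect to be the main obstacle. For $x\notin S$, the set $\epsilon_i(T)$ is $\C_x$-closed; this follows from Theorem~\ref{thm:imli}(4) together with the idempotence of cylindrification. So in $||\varphi||_{M_i}$ such a variable $x$ is unconstrained and ranges over the whole domain of $M_i$. The projection onto $\FV(\varphi)$ therefore agrees across $M_1$ and $M_2$ only if no such $x$ survives, i.e.\ only if $\FV(\varphi)\subseteq S$ whenever $T\neq\emptyset$. (If $T=\emptyset$, both projections are empty and the claim holds.)

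My first attempt at this step is an induction on the way the equivalence with $E$ arises, trying to show that a free variable of $\varphi$ that does not occur in $S$ can be eliminated. If $x\in\FV(\varphi)\setminus S$, then $||\varphi||=\C_x(||\varphi||)=||(\exists x)\varphi||$ in every structure. I would try to use this to justify reading $\FV(\varphi)$ as the set of variables on which $||\varphi||$ actually depends, and hence to reduce to the case $\FV(\varphi)\subseteq S$. In that case Step 1 with $W=\FV(\varphi)$ finishes the proof.

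I must flag that this reduction may not go through as the statement is worded. A formula such as $r(x_1,\ldots,x_n)\wedge(y\approx y)$ with $y\notin\{x_1,\ldots,x_n\}$ appears to be equivalent to the expression $r$, yet its projection onto $\FV(\varphi)$ ranges over the whole domain in the coordinate $y$. So the elimination in Step 2 needs either a further argument I have not found or a tacit convention on $\FV(\varphi)$. This is exactly where I expect the difficulty to lie.
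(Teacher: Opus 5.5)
\textbf{The statement is false as worded, so Step 2 cannot be closed.} The obstruction you flag there is genuine, and your formula is a valid counterexample. Take $r(x_1,\ldots,x_n)\in\F$ and $\varphi=r(x_1,\ldots,x_n)\wedge(y\approx y)$. Since $||y\approx y||_M=\D_{yy}=\V$, we get $||\varphi||_M=\{v\in\V\st\tu{v(x_1),\ldots,v(x_n)}\in r^M\}=\epsilon(||r||_M)$ for every $M$. So $\varphi$ is semantically equivalent to $r$ in the paper's sense. Now take domains $M_1=\{a\}$ and $M_2=\{a,b\}$ with $r^{M_1}=r^{M_2}=\{\tu{a,\ldots,a}\}$. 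The tuple sending every $x_i$ to $a$ and $y$ to $b$ lies in $\pi_{\FV(\varphi)}(||\varphi||_{M_2})$ but not in $\pi_{\FV(\varphi)}(||\varphi||_{M_1})$. Even the textbook domain-dependent formula $r(x)\vee\neg r(x)$ is semantically equivalent to $\DEE$ here, since both have value $\V$.

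\textbf{The paper's proof passes over exactly this point.} It writes $||\varphi||_{M_1}=\epsilon(||E||_{M_1})=\epsilon(||E||_{M_2})=||\varphi||_{M_2}$. These sets live in the different valuation spaces $M_1^{\X}$ and $M_2^{\X}$. The middle equality only becomes true after projecting onto $\S(E)$, which is your Step 1. The projection onto $\FV(\varphi)$ that the definition of domain independence requires is never addressed.

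\textbf{The fix is an extra hypothesis, not a cleverer Step 2.} Assume $\FV(\varphi)\subseteq\S(E)$. Then your Step 1 with $W=\FV(\varphi)$ is already a complete proof. Let $T=||E||_{M_1}=||E||_{M_2}$, by Lemma~\ref{lem:expr_independent}; then $\pi_{\FV(\varphi)}(||\varphi||_{M_i})=\pi_{\FV(\varphi)}(\epsilon_i(T))=\pi_{\FV(\varphi)}(T)$. Your remark that coordinates outside $S$ can be filled from a non-empty domain is precisely the detail the paper leaves implicit.

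The extra hypothesis costs nothing where the paper applies the theorem. The translation $\expr$ satisfies $\S(\expr(\chi))=\FV(\chi)$ for normalized $\chi$. The normalization of an allowed formula has the same free variables as the formula. So the closing theorem on allowed formulas survives.

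Drop your tentative elimination of variables in $\FV(\varphi)\setminus S$. $\FV(\varphi)$ is syntactic, and $||\varphi||=\C_x(||\varphi||)$ does not take $x$ out of it.
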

\begin{proof}
Let $\varphi$ be a formula such that there exists a relational expression $E$ semantically equivalent to $\varphi$. Let $M_1,M_2$ be databases which differ only in domains. Then, by Lemma \ref{lem:expr_independent}, we have $||\varphi||_{M_1}=\epsilon(||E||_{M_1})=\epsilon(||E||_{M_2})=||\varphi||_{M_2}$. 
\end{proof}

The active domain of a database is the set of all values occurring in the relations of database. More precisely, for database $M$, we denote by $\A(M)\subseteq M$ the set called \emph{active domain} of $M$, defined as 
\begin{align}
\A(M)=\bigcup_{r(x_1,\ldots,x_n)\in\F}\;\bigcup_{1\leq i\leq n}\{m_i\st (m_1,\ldots,m_n)\in r^M\}.
\end{align}

\begin{thm}
Every domain independent formula is semantically equivalent to a relational expression. 
\end{thm}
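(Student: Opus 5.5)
The strategy is the classical active-domain argument, carried out inside the cylindric set algebra. Given a domain independent formula $\varphi$, I relativize all of its quantifiers and its free variables to the active domain. I then show that the relativized formula is normalized, so that it can be translated by Lemma~\ref{lem:norm_form}, and that domain independence makes it equivalent to $\varphi$.

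\textbf{Step 1: the active-domain formula.} For a variable $x$, let $\alpha_x$ be the disjunction over all $r(x_1,\ldots,x_n)\in\F$ and all $1\leq i\leq n$ of the formula
$$(\exists \{x_1,\ldots,x_n\}\setminus\{x_i\})\,r(x_1,\ldots,x_n)$$
renamed so that its only free variable is $x$. For the renaming I use the construction $(\cdot)[x_i/x]$ when $x_i\neq x$, or, if a clash occurs, the detour through fresh $z$'s as in Lemma~\ref{lem:rename}. Directly from the definitions, $||\alpha_x||_M=\{v\st v(x)\in\A(M)\}$. Each disjunct is normalized. The disjuncts are built from members of $\F$ by conjunction with an equality containing a free variable, followed by $\exists$. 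All disjuncts have the same free-variable set $\{x\}$, so $\alpha_x$ is normalized by the $\vee$ rule. If $\A(M)$ could be empty, the edge case $\FV(\varphi)=\emptyset$ needs separate care; I handle it with $1$.

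\textbf{Step 2: relativization and normalization.} I define $\varphi^{a}$ by induction, in a form that stays normalized and whose free variables are exactly $\FV(\psi)$ at each subformula. The idea is to conjoin $\bigwedge_{x\in \FV(\psi)}\alpha_x$ at every subformula $\psi$. The clauses are as follows.
\begin{itemize}
\item $(1)^a=1$.
\item A relation formula is replaced, via Lemma~\ref{lem:rename}, by $r(x_1/y_1,\ldots)$, which is normalized.
\item $(x\approx y)^a=\alpha_x\wedge(x\approx y)$ when $x\neq y$, normalized by rule 3. When $x=y$ it is simply $\alpha_x$.
\item $(\neg\psi)^a=A_{\FV(\psi)}\wedge\neg\psi^a$, where $A_X=\bigwedge_{x\in X}\alpha_x$ (or $1$ if $X=\emptyset$).
\item $(\psi_1\vee\psi_2)^a=(A_{X}\wedge\psi_1^a)\vee(A_X\wedge\psi_2^a)$ with $X=\FV(\psi_1)\cup\FV(\psi_2)$, which equalizes the free variables.
\item $(\psi_1\wedge\psi_2)^a=\psi_1^a\wedge\psi_2^a$.
\item $((\exists x)\psi)^a=(\exists x)\psi^a$ if $x\in\FV(\psi)$, and $\psi^a$ otherwise.
\end{itemize}
One detail in the $\neg$ clause is that $A_X\wedge\neg\psi^a$ has the shape of rule 4 but is parsed as $\varphi_1\wedge\neg\psi$ with both sides having free variables $X$. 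I have to make sure $\FV(\psi^a)=\FV(\psi)$, which the inductive invariant guarantees. Using Theorem~\ref{thm:imli}-style set computations with $\C_x$, $\D_{xy}$ and complements, I prove by induction that
$$||\psi^a||_M=||\psi||_{M_A}\text{ restricted to valuations with values in }\A(M)\text{ on }\FV(\psi).$$
Here $M_A$ is the substructure on the active domain, and the equation is read via $\epsilon$ after projecting to $\FV(\psi)$. The $\exists$ case is where relativizing the quantifier to $\A(M)$ matters. The negation case is exactly where the guard $A_X$ is needed.

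\textbf{Step 3: using domain independence.} Let $E=\expr(\varphi^a)$. Then $\epsilon(||E||_M)=||\varphi^a||_M$, and by Step~2 this is the cylindrified image of the projection $\pi_{\FV(\varphi)}(||\varphi||_{M_A})$. Since $M$ and $M_A$ differ only in domains, domain independence gives $\pi_{\FV(\varphi)}(||\varphi||_{M_A})=\pi_{\FV(\varphi)}(||\varphi||_{M})$. Hence $||\varphi^a||_M=||\varphi||_M$, except that valuations differ on variables outside $\FV(\varphi)$, which both sets ignore.

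The main obstacle is the bookkeeping in Step~2. Two difficulties arise there. First, $||\cdot||$ lives in $M^\X$ with a fixed finite $\X$, so one must compare valuation sets over two different domains $M$ and $M_A$; this can only be done through projections onto free variables, and I would formulate the inductive claim via $\pi_{\FV(\psi)}$ from the start. Second, the fresh variables needed for the renamings in $\alpha_x$ must exist, which I assume as in Lemma~\ref{lem:rename}.
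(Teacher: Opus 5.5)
Your argument is correct, apart from one edge case discussed below. It rests on the same idea as the paper's proof: relativize everything to the active domain, then compare $M$ with $M_A$ through domain independence. Your guards $\alpha_x$ and $A_X$ are formula-level versions of the paper's expressions $E_S$, and $\expr(\varphi^a)$ is, up to semantic equivalence, the paper's $\expri(\varphi)$.

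The routes differ in two ways.
\begin{enumerate}
\item You stay on the logical side, building a normalized formula and only then invoking Lemma~\ref{lem:norm_form}. The paper writes the relational expression down directly.
\item More substantively, you prove an invariant in $M$ itself, relating $||\psi^a||_M$ to $\pi_{\FV(\psi)}(||\psi||_{M_A})$ at every subformula. That invariant is true and the induction goes through case by case. It is, however, exactly the two-domain bookkeeping you single out as the main obstacle.
\end{enumerate}
The paper avoids that bookkeeping by evaluating the translation in $M_A$. There $E_S$ denotes all of $\A(M)^S$; in your terms, every $M_A$-valuation satisfies $\alpha_x$, since $\A(M_A)=\A(M)$. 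So all guards are vacuous, and $\epsilon(||\expri(\varphi)||_{M_A})=||\varphi||_{M_A}$ is immediate from Theorem~\ref{thm:imli}. Lemma~\ref{lem:expr_independent} then carries the value of the \emph{expression} from $M_A$ to $M$, and domain independence carries the value of the \emph{formula}. Applying the same trick to $\expr(\varphi^a)$ would let you drop your Step~2 induction entirely. What your version buys is an explicit normalized formula equivalent to $\varphi$, in keeping with the rest of the paper.

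The edge case needs more than ``handle it with $1$''. Suppose every relation of positive arity in $M$ is empty. Then $\A(M)=\emptyset$, $M_A$ is not a structure, and the relativization can be wrong for sentences. For example, $(\exists x)(x\approx x)$ is domain independent and true in every $M$, but its relativization $(\exists x)\alpha_x$ is false in such an $M$.

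\begin{itemize}
\item For $\FV(\varphi)\neq\emptyset$ nothing breaks. Comparing $M$ with a copy on a disjoint domain gives $||\varphi||_M=\emptyset=||\varphi^a||_M$.
\item For a sentence, domain independence makes its truth value on such databases depend only on the $0$-ary relations. A case split $(\beta\wedge\varphi^a)\vee(\gamma\wedge\neg\beta)$ repairs it, with $\beta=(\exists x)\alpha_x$ and $\gamma$ a suitable normalized Boolean combination of $0$-ary atoms (or $1$).
\end{itemize}
The paper's proof also passes over this case without comment.
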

\begin{proof}
For every scheme $S$, a relational expression $E_S$ such that $||E_S||_M={\A(M)}^S$ for every database $M$ can be easily constructed. For a formula $\varphi$, we define the relational expression $\expri(\varphi)$ by the following rules.
\begin{enumerate}
\item $\expri(1)=\DEE$,
\item $\expri(r(x_1,\ldots,x_n))=\expr(r(x_1,\ldots,x_n))$,
\item $\expri(x_1\approx x_2)=\sigma_{x_1\approx x_2}(E_{\{x_1,x_2\}})$,
\item $\expri(\neg\varphi)=E_{\FV(\varphi)}\setminus\expri(\varphi)$,
\item $\expri(\varphi\wedge\psi)=\expri(\varphi)\bowtie\expri(\psi)$,
\item $\expri(\varphi\vee\psi)=\expri(\neg(\neg\varphi\wedge\neg\psi))$,
\item $\expri((\exists x)\varphi)=\pi_{\FV(\varphi)\setminus\{x\}}(\expri(\varphi))$.
\end{enumerate}

Let $\varphi$ be a domain independent formula and let $M$ be a database. Define $M_A$ to be the database obtained from $M$ by replacing its domain with $\A(M)$. Then $\epsilon(||\expri(\varphi)||_{M_A})=||\varphi||_{M_A}$, and hence $||\expri(\varphi)||_{M_A}=\pi_{\FV(\varphi)}(||\varphi||_{M_A})$. Since $\varphi$ is domain independent, we have $\pi_{\FV(\varphi)}(||\varphi||_{M_A})=\pi_{\FV(\varphi)}(||\varphi||_{M})$, and by Lemma \ref{lem:expr_independent},  $||\expri(\varphi)||_M=||\expri(\varphi)||_{M_A}$. Combining these equalities, we obtain  $||\expri(\varphi)||_M=\pi_{\FV(\varphi)}(||\varphi||_M)$ and therefore \\$\epsilon(||\expri(\varphi)||_M)=||\varphi||_M$. Thus, $\varphi$ is semantically equivalent to $\expri(\varphi)$.
\end{proof}
Note that the translation of domain-independent formulas, as given in the proof of the preceding theorem, is purely of theoretical interest. Domain-independent formulas have another major disadvantage. 
A formula  $\varphi$ is \emph{satisfiable} if there exists a structure $M$ such that $||\varphi||_M$ is nonempty. 
Trakhtenbrot \cite{Trakhtenbrot1950} showed that the problem of determining whether a formula $\varphi$ is satisfiable is undecidable. 
It follows that the problem of determining whether a given formula is domain independent is also undecidable, since the formula $\varphi\wedge\neg r(x)$ is domain independent if and only if $\varphi$ is not satisfiable, where the variable $x$ is not free in $\varphi$. 

For practical purposes, researchers propose sufficiently large, decidable subsets of the set of domain-independent formulas (for an overview,  see  \cite{Topor09}).

\section{Allowed formulas}

We start with the problem of determining whether, for a given formula $\varphi$ and variables $x,y$, we have $||\varphi||_M\subseteq||x\approx y||_M$ for every structure $M$. This problem is undecidable, since for a formula $\varphi$ and variables $x,y\notin\FV(\varphi)$, the formula $\varphi\wedge r(x,y)$ has the desired property if and only if $\varphi$ is not satisfiable. Therefore, we infer from the structure of a given formula only a sufficient condition for the property being studied. 

Let $M$ be a database. For a set $A$, we denote by $\id_A$ the identity relation on $A$, i.e., $\id_A=\{\tu{a,a}\st a\in A\}$. For a binary relation $R$ on $\X$, we denote by $\Eq(R)$ the smallest equivalence on $\X$ (with respect to the set inclusion) containing $R$.

By the following rules, for a formula $\varphi$, we define equivalences $\eq(\varphi)$ and $\coeq(\varphi)$ on $\X$, called \emph{equality} and \emph{coequality of variables} in $\varphi$, respectively:
\begin{enumerate}
\item $\eq(1)=\coeq(1)=\id_\X$,
\item $\eq(r(x_1,\ldots,x_n))=\id_\X$,
\item $\coeq(r(x_1,\ldots,x_n))=\id_\X$,
\item $\eq(x_1\approx x_2)=\Eq(\{\tu{x_1,x_2}\}),$ 
\item $\coeq(x_1\approx x_2)=\id_\X$,
\item $\eq(\neg\varphi)=\coeq(\varphi)$,
\item $\coeq(\neg\varphi)=\eq(\varphi)$,
\item $\eq(\varphi\wedge\psi)=\Eq(\eq(\varphi)\cup\eq(\psi))$,
\item $\coeq(\varphi\wedge\psi)=\coeq(\varphi)\cap\coeq(\psi)$,
\item $\eq((\exists x)\varphi)=\Eq(\eq(\varphi)|_{(\X\setminus\{x\})^2})$,
\item $\coeq((\exists x)\varphi)=\Eq(\coeq(\varphi)|_{(\X\setminus\{x\})^2})$.
\end{enumerate}
The rules just defined do not cover formulas in the form of a disjunction. A formula in the form $\varphi\vee\psi$ can be treated as the formula $\neg(\neg\varphi\wedge\neg\psi)$. This rule for disjunction also applies in the following recursive definitions, unless disjunction is treated explicitly. For example, we have $\eq((x_1\approx x_2)\wedge r(x_1,x_3)\wedge(x_2\approx x_3))=\Eq(\{\tu{x_1, x_2},\tu{x_2,x_3}\})$.

The meaning of equality and coequality of variables in formulas are captured by the following lemma.
\begin{lem}\label{lem:eq_coeq}
If $\tu{x,y}\in\eq(\varphi)$, then $||\varphi||\subseteq\D_{xy}$; and if $\tu{x,y}\in\coeq(\varphi)$, then $\overline{||\varphi||}\subseteq\D_{xy}$.
\end{lem}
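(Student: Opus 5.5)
We prove both claims simultaneously by structural induction on $\varphi$. Disjunction is handled through $\neg(\neg\varphi\wedge\neg\psi)$, so only $1$, relation formulas, equality formulas, $\neg$, $\wedge$ and $\exists$ need treatment. The induction hypothesis for a formula $\varphi$ is the pair of statements: for all $\tu{x,y}\in\eq(\varphi)$ we have $||\varphi||\subseteq\D_{xy}$, and for all $\tu{x,y}\in\coeq(\varphi)$ we have $\overline{||\varphi||}\subseteq\D_{xy}$.

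A preliminary observation lets us handle the closures $\Eq(\cdot)$. For a fixed set $V\subseteq\V$, the relation $R_V=\{\tu{x,y}\st V\subseteq\D_{xy}\}$ is an equivalence on $\X$. Reflexivity holds because $\D_{xx}=\V$, symmetry because $\D_{xy}=\D_{yx}$, and transitivity because $\D_{xy}\cap\D_{yz}\subseteq\D_{xz}$. Consequently, if some relation $R\subseteq R_V$, then $\Eq(R)\subseteq R_V$. So at every step involving $\Eq$ it suffices to check the generating pairs.

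The base cases are immediate. All pairs in $\id_\X$ satisfy the claim trivially since $\D_{xx}=\V$, and for $x_1\approx x_2$ the generator $\tu{x_1,x_2}$ gives $||x_1\approx x_2||=\D_{x_1x_2}$. Negation simply swaps the two statements, because $\overline{\overline{V}}=V$. For conjunction we use two facts. If a pair lies in $\eq(\varphi)$ or $\eq(\psi)$, then $||\varphi||\cap||\psi||$ is contained in the corresponding diagonal, and the observation extends this to $\Eq$. If a pair lies in $\coeq(\varphi)\cap\coeq(\psi)$, then $\overline{||\varphi||\cap||\psi||}=\overline{||\varphi||}\cup\overline{||\psi||}\subseteq\D_{xy}$.

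The quantifier case is the only nontrivial step. Take a pair $\tu{x',y'}$ of the restricted relation, so $x',y'\neq x$ and it lies in $\eq(\varphi)$ (respectively $\coeq(\varphi)$). The key fact is that $\C_x(\D_{x'y'})=\D_{x'y'}$ whenever $x\notin\{x',y'\}$: changing the value at $x$ does not affect $v(x')$ or $v(y')$.
\begin{itemize}
\item For $\eq$: from $||\varphi||\subseteq\D_{x'y'}$ and the monotonicity of $\C_x$ we get $\C_x(||\varphi||)\subseteq\C_x(\D_{x'y'})=\D_{x'y'}$.
\item For $\coeq$: Lemma~\ref{lem:Cx_overline}(3) gives $\overline{\C_x(||\varphi||)}\subseteq\C_x(\overline{||\varphi||})$. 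By the hypothesis this is contained in $\C_x(\D_{x'y'})=\D_{x'y'}$.
\end{itemize}
Applying the observation once more handles the closure $\Eq$, which completes the induction. The main obstacle is this $\coeq$ case under $\exists$, where complement and cylindrification must be commuted; that is exactly what Lemma~\ref{lem:Cx_overline}(3) supplies.
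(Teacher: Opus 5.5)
Your proof is correct and follows the same case-by-case structural induction as the paper. The one organisational difference is how you discharge the $\Eq$-closures: you do all of them at once by noting that $\{\tu{x,y}\st V\subseteq\D_{xy}\}$ is an equivalence. The paper instead uses an explicit chain $x=x_1,\ldots,x_n=y$ with cylindric axiom~7 in the conjunction case, and in the quantifier case the observation that for $x\neq y$ a pair in $\eq((\exists z)\varphi)$ already lies in $\eq(\varphi)$. Also, the $\coeq$ case under $\exists$ is not really an obstacle: the paper simply uses Lemma~\ref{lem:Cx_overline}(1), $\overline{\C_z(||\varphi||)}\subseteq\overline{||\varphi||}\subseteq\D_{xy}$, so neither item~(3) nor the invariance $\C_z(\D_{xy})=\D_{xy}$ is needed there.
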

\begin{proof}
Since $\D_{xx}=\V$ for any variable $x$, the assertion trivially holds for the case when $x=y$. So we can suppose that $x\neq y$.  We prove this assertion by structural induction on formulas. 

Clearly, the assertion holds for atomic formulas. Assume that the assertion holds for formulas $\varphi$ and $\psi$.

If $\tu{x,y}\in\eq(\neg\varphi)$, then $\tu{x,y}\in\coeq(\varphi)$, and by assumption, $\overline{||\varphi||}\subseteq\D_{xy}$. Thus, $||\neg\varphi||\subseteq\D_{xy}$. Similarly, one can prove the case when $\tu{x,y}\in\coeq(\neg\varphi)$. Therefore, the assertion holds for negated formulas.

Let $\tu{x,y}\in\eq(\varphi\wedge\psi)$. Then $\tu{x,y}\in\Eq(\eq(\varphi)\cup\eq(\psi))$. There are pairwise distinct variables $x_1,\ldots,x_n$ such that $x_1=x$, $x_n=y$, and $\tu{x_i,x_{i+1}}\in\eq(\varphi)\cup\eq(\psi)$ for every $1\leq i< n$. Thus, for every $1\leq i<n$, we have $||\varphi||\subseteq\D_{x_ix_{i+1}}$ or $||\psi||\subseteq\D_{x_ix_{i+1}}$; moreover, $||\varphi||\cap||\psi||\subseteq\D_{x_ix_{i+1}}$.  Using 7. axiom of Cylindric algebras, we obtain
\begin{align*}
||\varphi\wedge\psi||&= ||\varphi||\cap||\psi||\subseteq\D_{x_1x_2}\cap\cdots\cap\D_{x_{n-1}x_n}\\
&\subseteq\C_{\{x_2,\ldots,x_{n-1}\}}(\D_{x_1x_2}\cap\cdots\cap\D_{x_{n-1}x_n})=\D_{xy}.
\end{align*}
If $\tu{x,y}\in\coeq(\varphi\wedge\psi)$, then $\tu{x,y}\in\coeq(\varphi)$ and $\tu{x,y}\in\coeq(\psi)$. By assumption, $||\varphi||\subseteq\D_{xy}$ and  $||\psi||\subseteq\D_{xy}$. We have $\overline{||\varphi\wedge\psi||}=\overline{||\varphi||}\cup\overline{||\psi||}\subseteq \D_{xy}$. We proved the assertion for formulas in the form of a conjunction.

Let $\tu{x,y}\in\eq((\exists z)\varphi)$. Since $x\neq y$, $\tu{x,y}\in\eq(\varphi)$. We have
\begin{align*}
||(\exists z)\varphi||=\C_z(||\varphi||)\subseteq\C_z(\D_{xy})=\D_{xy}. 
\end{align*}
Let $\tu{x,y}\in\coeq((\exists z)\varphi$. From $x\neq y$ follows that $x\neq z$, $y\neq z$, and $\tu{x,y}\in\coeq(\varphi)$. We have
\begin{align*}
\overline{||(\exists z)\varphi||}=\overline{\C_z(||\varphi||)}\subseteq\overline{||\varphi||}\subseteq\D_{xy}. 
\end{align*}
We showed that the assertion holds for existentially quantified formulas.
\end{proof}

Equality and coequality of variables in formulas become trivial for certain sets of formulas. \emph{Positive formulas}  are inductively defined by the following rules.
\begin{enumerate}
\item each atomic formula is positive;
\item if $\varphi$ is not positive, then $\neg\varphi$ is positive;
\item if $\varphi$ is positive, then $(\exists x)\varphi$ is positive;
\item if $\varphi$ or $\psi$ is positive, then $\varphi\wedge\psi$ is positive.
\end{enumerate}
Let us recall that formulas of the form $\varphi\wedge\psi$ are, in the previous definition, treated as shortcut for $\neg(\neg\varphi\wedge\neg\psi)$. 
Formulas that are not positive are called \emph{negative}.

The following lemma follows directly from the definitions of equality and coequality of variables in formulas.
\begin{lem}\label{lem:pos_neg_coeq_eq}
If $\varphi$ is a positive or negative formula, then $\coeq(\varphi)$ or $\eq(\varphi)$, respectively, is the identity on $\X$. 
\end{lem}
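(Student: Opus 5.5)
The statement asserts two facts together: positive formulas have trivial coequality, and negative formulas have trivial equality. I would prove both at once by structural induction on $\varphi$, with the combined hypothesis: if $\varphi$ is positive then $\coeq(\varphi)=\id_\X$, and if $\varphi$ is negative then $\eq(\varphi)=\id_\X$. Disjunctions need no separate treatment. As agreed earlier, $\varphi\vee\psi$ is read as $\neg(\neg\varphi\wedge\neg\psi)$, both in the definition of positive formulas and in the rules for $\eq$ and $\coeq$. So the induction only has to handle atomic formulas, $\neg$, $\wedge$ and $\exists$.

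First I would read off, from the definition of positive formulas, exactly when each compound formula is negative:
\begin{itemize}
\item $\neg\varphi$ is negative iff $\varphi$ is positive;
\item $(\exists x)\varphi$ is negative iff $\varphi$ is negative;
\item $\varphi\wedge\psi$ is negative iff both $\varphi$ and $\psi$ are negative.
\end{itemize}

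\textbf{Atomic case.} Every atomic formula is positive. The rules for $1$, relation formulas and equality formulas give $\coeq=\id_\X$ directly.

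\textbf{Negation.} If $\neg\varphi$ is positive, then $\varphi$ is negative. Hence $\coeq(\neg\varphi)=\eq(\varphi)=\id_\X$ by the induction hypothesis. If $\neg\varphi$ is negative, then $\varphi$ is positive, so $\eq(\neg\varphi)=\coeq(\varphi)=\id_\X$.

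\textbf{Conjunction.} If $\varphi\wedge\psi$ is positive, one conjunct, say $\varphi$, is positive, so $\coeq(\varphi)=\id_\X$. Then $\coeq(\varphi\wedge\psi)=\id_\X\cap\coeq(\psi)=\id_\X$, because $\coeq(\psi)$ is an equivalence and therefore contains $\id_\X$. If $\varphi\wedge\psi$ is negative, both conjuncts are negative. Then $\eq(\varphi\wedge\psi)=\Eq(\id_\X\cup\id_\X)=\id_\X$, since $\id_\X$ is already an equivalence.

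\textbf{Existential quantification.} If $(\exists x)\varphi$ is positive, then $\varphi$ is positive, and
\[
\coeq((\exists x)\varphi)=\Eq(\id_\X|_{(\X\setminus\{x\})^2})=\Eq(\id_{\X\setminus\{x\}})=\id_\X .
\]
The last step holds because the smallest equivalence on $\X$ containing $\id_{\X\setminus\{x\}}$ only adds the reflexive pair $\tu{x,x}$. The negative case is identical with $\eq$ in place of $\coeq$.

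\textbf{Main point of care.} The only step needing attention is the quantifier case: one must check that restricting the identity and then taking the $\Eq$-closure produces nothing beyond reflexive pairs. The rest of the argument is routine bookkeeping of the mutual recursion between $\eq$ and $\coeq$ through negation.
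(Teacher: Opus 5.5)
Your proposal is correct. It is the structural induction the paper has in mind when it says the lemma ``follows directly from the definitions'' of $\eq$ and $\coeq$. The paper gives no further argument, so your case analysis (reading disjunction as $\neg(\neg\varphi\wedge\neg\psi)$, and checking $\Eq(\id_\X|_{(\X\setminus\{x\})^2})=\id_\X$ in the quantifier step) just makes that induction explicit.
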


We can infer from the structure of a given formula the set of variables for which the value of the formula has elements only in the active domain.

For an equivalence $E$ on $\X$ and a subset $X$ of $\X$, we denote by $\cl_E(X)$ the set of variables given by
\begin{align}
\cl_E(X)=\{y\in \X\st x\in X \textrm{ and }\tu{x,y}\in E\}. 
\end{align}

For a formula $\varphi$, we define sets of variables $\genz(\varphi)$ and $\cogenz(\varphi)$ by the following rules:
\begin{enumerate}
\item $\genz(1)=\cogenz(1)=\emptyset,$
\item $\genz(r(x_1,\ldots,x_n))=\{x_1,\ldots,x_n\},$
\item $\cogenz(r(x_1,\ldots,x_n))=\emptyset,$
\item $\genz(x\approx y)=\emptyset$,
\item $\cogenz(x\approx y)=\emptyset$,
\item $\genz(\neg\varphi)=\cogenz(\varphi),$
\item $\cogenz(\neg\varphi)=\genz(\varphi),$
\item $\genz(\varphi\wedge\psi)=\cl_E(\genz(\varphi)\cup\genz(\psi))$, where $E=\eq(\varphi\wedge\psi)$,
\item $\cogenz(\varphi\wedge\psi)=\cogenz(\varphi)\cap\cogenz(\psi),$
\item $\genz((\exists x)\varphi)=\genz(\varphi)\setminus\{x\}$,
\item $\cogenz((\exists x)\varphi)=\cogenz(\varphi)\setminus\{x\}$.
\end{enumerate}

Prior to proving the following theorem, describing the role of the mappings $\genz$ and $\cogenz$, 
we establish some auxiliary lemmas.

\begin{lem}\label{lem:cap_project}
Let $X_1,X_2\subseteq\X$ and $V_1,V_2\subseteq\V$. Then it holds
\begin{align*}
\epsilon(\pi_{X_1\cup X_2}(V_1\cap V_2))\subseteq\epsilon(\pi_{X_1}(V_1))\cap\epsilon(\pi_{X_2}(V_2)).
\end{align*}
\end{lem}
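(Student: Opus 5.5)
We read $\pi_X(V)$ for $V\subseteq\V$ as the set of restrictions $\{v|_X\st v\in V\}$, a relation over $X$, so that $\epsilon(\pi_X(V))=\{w\in\V\st w|_X=v|_X\text{ for some }v\in V\}$. Equivalently, $\epsilon(\pi_X(V))=\C_{\X\setminus X}(V)$; this is the valuation-level form of clause (4) of Theorem~\ref{thm:imli}. We prove the inclusion directly by chasing elements, using only monotonicity of projection together with this description of $\epsilon$.

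Let $w\in\epsilon(\pi_{X_1\cup X_2}(V_1\cap V_2))$. Then there is $v\in V_1\cap V_2$ with $w|_{X_1\cup X_2}=v|_{X_1\cup X_2}$. Restricting further gives $w|_{X_1}=v|_{X_1}$. Since $v\in V_1$, we get $v|_{X_1}\in\pi_{X_1}(V_1)$, so $w|_{X_1}\in\pi_{X_1}(V_1)$, and hence $w\in\epsilon(\pi_{X_1}(V_1))$. The same argument with $X_2$ and $V_2$ gives $w\in\epsilon(\pi_{X_2}(V_2))$, and therefore $w$ lies in the intersection.

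Alternatively, the argument can be carried out in the cylindric algebra. With $Y_i=\X\setminus X_i$, the left-hand side is $\C_{Y_1\cap Y_2}(V_1\cap V_2)$. By monotonicity of cylindrifications this is contained in $\C_{Y_1\cap Y_2}(V_1)\cap\C_{Y_1\cap Y_2}(V_2)$. Each term is in turn contained in $\C_{Y_i}(V_i)$, because $\C$ is extensive and the $\C_x$ commute, so cylindrifying over a larger set of variables only enlarges the result.

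There is no real obstacle here. The only point needing care is making precise what $\pi_X$ means when applied to a set of valuations rather than to a relation over a scheme. Once that convention is fixed as above, the proof is a two-line element chase.
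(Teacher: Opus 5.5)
Your element chase is correct and is essentially the paper's own proof: pick $v\in V_1\cap V_2$ agreeing with $w$ on $X_1\cup X_2$, restrict to each $X_i$, and conclude that $w\in\epsilon(\pi_{X_i}(V_i))$ for $i=1,2$. Your alternative cylindric-algebra argument, via $\epsilon(\pi_X(V))=\C_{\X\setminus X}(V)$ together with monotonicity, extensivity and commutativity of the cylindrifications, is also sound, though the paper does not need it.
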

\begin{proof}
Let $v\in\epsilon(\pi_{X_1\cup X_2}(V_1\cap V_2))$. Then there exists $v'\in V_1\cap V_2$ such that $v|_{X_1\cup X_2}=v'|_{X_1\cup X_2}$. Clearly, $v'|_{X_1}\in\pi_{X_1}(V_1)$, $v'|_{X_2}\in\pi_{X_2}(V_2)$, $v|_{X_1}=v'|_{X_1}$, and $v|_{X_2}=v'|_{X_2}$. Thus, $v\in\epsilon(\pi_{X_1}(V_1))\cap\epsilon(\pi_{X_2}(V_2))$. 
\end{proof}

\begin{lem}\label{lem:cap_dom}
Let $N\subseteq M$, $X_1,X_2\subseteq\X$. Then it holds
\begin{align*}
\epsilon(N^{X_1\cup X_2})=\epsilon(N^{X_1})\cap\epsilon(N^{X_2}). 
\end{align*}
\end{lem}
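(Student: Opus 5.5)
We argue by double inclusion directly from the definition of $\epsilon$. Recall that $N^{X}$ is a relation over the scheme $X$, so $\epsilon(N^{X})=\{v\in\V\st v|_{X}\in N^{X}\}=\{v\in\V\st v(x)\in N\text{ for every }x\in X\}$. This membership characterization is the only thing we need, and once it is in place both inclusions follow at once.

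For the inclusion $\subseteq$, let $v\in\epsilon(N^{X_1\cup X_2})$. Then $v(x)\in N$ for every $x\in X_1\cup X_2$. In particular this holds for every $x\in X_1$ and for every $x\in X_2$. Hence $v|_{X_1}\in N^{X_1}$ and $v|_{X_2}\in N^{X_2}$, that is, $v\in\epsilon(N^{X_1})\cap\epsilon(N^{X_2})$. One could also obtain this inclusion from Lemma~\ref{lem:cap_project}, but the direct argument is simpler.

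For the inclusion $\supseteq$, let $v\in\epsilon(N^{X_1})\cap\epsilon(N^{X_2})$. Then $v(x)\in N$ for every $x\in X_1$ and for every $x\in X_2$, so $v(x)\in N$ for every $x\in X_1\cup X_2$. Thus $v|_{X_1\cup X_2}\in N^{X_1\cup X_2}$, i.e.\ $v\in\epsilon(N^{X_1\cup X_2})$.

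Alternatively, one can note that $N^{X_1\cup X_2}=N^{X_1}\bowtie N^{X_2}$, because a tuple over $X_1\cup X_2$ has all its values in $N$ exactly when both of its restrictions do. The claim then follows from item~3 of Theorem~\ref{thm:imli}. There is no real obstacle here; the one point to handle carefully is the characterization of $\epsilon(N^{X})$ through restrictions of valuations, which holds by definition.
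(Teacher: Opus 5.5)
Your proof is correct and is essentially the paper's argument: both just unfold the definition of $\epsilon$. The paper writes it as a chain of equivalences on the restrictions $v|_{X_1}$, $v|_{X_2}$, $v|_{X_1\cup X_2}$, with a side remark on gluing $t_1\cup(t_2|_{X_2\setminus X_1})$; your pointwise description ($v(x)\in N$ for all $x\in X$) makes that remark unnecessary, since $v$ itself supplies the combined tuple. Your alternative via $N^{X_1\cup X_2}=N^{X_1}\bowtie N^{X_2}$ and item~3 of Theorem~\ref{thm:imli} is also valid.
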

\begin{proof}
Let $v\in\V$. Then $v\in\epsilon(N^{X_1\cup X_2})$ if{}f there exists $t\in N^{X_1\cup X_2}$ such that $v|_{X_1\cup X_2}=t$ if{}f there exists $t_1\in N^{X_1}$ and $t_2\in N^{X_2}$ such that $v|_{X_1}=t_1$ and $v|_{X_2}=t_2$ if{}f $v\in\epsilon(N^{X_1})\cap\epsilon(N^{X_2})$. Note that for $t_1\in N^{X_1}$ and $t_2\in N^{X_2}$, we have $t_1\cup(t_2|_{X_2\setminus X_1})\in N^{X_1\cup X_2}$.
\end{proof}

\begin{lem}\label{lem:eq_dom}
Let $N\subseteq M$, $X\subseteq\X$, and $\varphi$ by a formula such that $\pi_X(||\varphi||)\subseteq N^X$. Then it holds
\begin{align*}
\pi_{\cl_{\eq(\varphi)}(X)}(||\varphi||)\subseteq N^{\cl_{\eq(\varphi)}(X)}. 
\end{align*}
\end{lem}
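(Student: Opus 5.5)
The argument works directly on valuations and relies on Lemma~\ref{lem:eq_coeq}. Write $E=\eq(\varphi)$ and $Y=\cl_E(X)$. We must show that for every $v\in||\varphi||$, the restriction $v|_Y$ takes values only in $N$. Equivalently, $v(y)\in N$ for every $y\in Y$.

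Fix $v\in||\varphi||$ and $y\in Y$. By the definition of $\cl_E$ there is a variable $x\in X$ with $\tu{x,y}\in E=\eq(\varphi)$. Lemma~\ref{lem:eq_coeq} then gives $||\varphi||\subseteq\D_{xy}$, so $v(x)=v(y)$. The hypothesis $\pi_X(||\varphi||)\subseteq N^X$ gives $v|_X\in N^X$, and in particular $v(x)\in N$. Hence $v(y)=v(x)\in N$.

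Since $y\in Y$ was arbitrary, $v|_Y\in N^Y$. Because $v\in||\varphi||$ was also arbitrary, $\pi_Y(||\varphi||)\subseteq N^Y$, which is the claim.

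Two small points need care. First, $X\subseteq Y$ because $E$ is reflexive, so the case $y=x$ is covered trivially. Second, $\pi_X$ is applied here to a set of valuations rather than to a relation over a scheme containing $X$. I read it as $\{v|_X\st v\in||\varphi||\}$, the same convention Lemma~\ref{lem:cap_project} uses. Once Lemma~\ref{lem:eq_coeq} is available, the main step, passing from the syntactic equality $\tu{x,y}\in\eq(\varphi)$ to the semantic fact $v(x)=v(y)$, is immediate, so I expect no real obstacle.
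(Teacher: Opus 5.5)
Your proof is correct and is essentially the paper's argument: for each variable of $\cl_{\eq(\varphi)}(X)$ you pick a witness in $X$, use Lemma~\ref{lem:eq_coeq} to get $||\varphi||\subseteq\D_{xy}$, and carry membership in $N$ across that equality. The only difference is cosmetic. You argue with a valuation $v\in||\varphi||$ rather than a projected tuple $t$, which makes explicit why $t(x)=t(y)$ (a point the paper leaves implicit). Your reading of $\pi_X$ is also the intended one, since valuations are tuples over $\X$.
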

\begin{proof}
Let $t\in\pi_{\cl_{\eq(\varphi)}(X)}(||\varphi||)$ and $x\in\cl_{\eq(\varphi)}(X)$. Then there exists $y\in X$ such that $\tu{x,y}\in\eq(\varphi)$. Thus, $||\varphi||\subseteq\D_{xy}$.  Now, $t|_X\in\pi_X(||\varphi||)$, and therefore $t|_X\in N^X$. We have $t(y)\in N$, and since $t(y)=t(x)$, it follows $t(x)\in N$ as well. We have thus shown that $t\in N^{\cl_{\eq(\varphi)}(X)}$.
\end{proof}

The following observations are immediate:
\begin{align}
&\pi_Y(\cl_X(V))=\pi_Y(V),&\textrm{ provided that } X\cap Y=\emptyset, \label{eq:pi_C_disjoint}\\
&\pi_Y(\overline{\cl_X(V)})\subseteq\pi_Y(\overline{V}).& \label{eq:pi_overline_C}
\end{align}

We can now state a theorem that describes the meaning of the mappings $\genz$ and $\cogenz$.
\begin{thm}
For every formula $\varphi$, it holds
\begin{align*}
\pi_{\genz(\varphi)}(||\varphi||)&\subseteq \A(M)^{\genz(\varphi)},\\
\pi_{\cogenz(\varphi)}(\overline{||\varphi||})&\subseteq \A(M)^{\cogenz(\varphi)}. 
\end{align*}
\end{thm}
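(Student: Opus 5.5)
We prove both inclusions simultaneously by structural induction on $\varphi$, in the same way as Lemma~\ref{lem:eq_coeq}. Disjunction is handled through its reading as $\neg(\neg\varphi\wedge\neg\psi)$, so only atomic formulas, negation, conjunction and existential quantification need to be treated. Throughout we use $\pi_Y$ applied to sets of valuations, i.e. $\pi_Y(V)=\{v|_Y\st v\in V\}$. We also read the observations \eqref{eq:pi_C_disjoint} and \eqref{eq:pi_overline_C} with $\C_X$ in place of $\cl_X$.

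\textbf{Atomic formulas.} For $1$ and $x\approx y$ both sets are empty. Projection onto $\emptyset$ gives at most $\{\emptyset\}=\A(M)^\emptyset$, so there is nothing to prove. For $r(x_1,\ldots,x_n)$ we have $\cogenz=\emptyset$, which is again trivial. Every $v\in||r(x_1,\ldots,x_n)||$ has $\tu{v(x_1),\ldots,v(x_n)}\in r^M$, so each $v(x_i)$ lies in $\A(M)$ by the definition of the active domain. Note that $\F$ contains a formula for $r$ whose variables may differ, but $\A(M)$ is defined from $r^M$, so this does not matter.

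\textbf{Negation.} This step is immediate: $\genz(\neg\varphi)=\cogenz(\varphi)$ and $||\neg\varphi||=\overline{||\varphi||}$, and symmetrically for $\cogenz(\neg\varphi)$ using $\overline{\overline V}=V$. The two inclusions simply swap.

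\textbf{Conjunction, cogen part.} Let $X=\cogenz(\varphi)\cap\cogenz(\psi)$. Then $\overline{||\varphi\wedge\psi||}=\overline{||\varphi||}\cup\overline{||\psi||}$, and projection onto $X$ distributes over union. Each piece projects into $\A(M)^X$, because it is a restriction of a projection onto $\cogenz(\varphi)$ (respectively $\cogenz(\psi)$), and that projection lies in $\A(M)$ by the induction hypothesis.

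\textbf{Conjunction, gen part.} Set $X_1=\genz(\varphi)$ and $X_2=\genz(\psi)$.
\begin{enumerate}
\item By Lemma~\ref{lem:cap_project} applied to $V_1=||\varphi||$ and $V_2=||\psi||$, the induction hypotheses and Lemma~\ref{lem:cap_dom}, we get $\epsilon(\pi_{X_1\cup X_2}(||\varphi\wedge\psi||))\subseteq\epsilon(\A(M)^{X_1\cup X_2})$. Since $\epsilon$ is injective on relations over a fixed scheme, $\pi_{X_1\cup X_2}(||\varphi\wedge\psi||)\subseteq\A(M)^{X_1\cup X_2}$.
\item Lemma~\ref{lem:eq_dom} with $N=\A(M)$, $X=X_1\cup X_2$ and the formula $\varphi\wedge\psi$ then extends this to $\cl_{\eq(\varphi\wedge\psi)}(X_1\cup X_2)=\genz(\varphi\wedge\psi)$. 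Lemma~\ref{lem:eq_dom} rests on Lemma~\ref{lem:eq_coeq}.
\end{enumerate}

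\textbf{Existential quantifier.} Write $Y=\genz(\varphi)\setminus\{x\}$. Since $x\notin Y$, \eqref{eq:pi_C_disjoint} gives $\pi_Y(\C_x(||\varphi||))=\pi_Y(||\varphi||)$. This is a restriction of $\pi_{\genz(\varphi)}(||\varphi||)$, so it lies in $\A(M)^Y$. For the cogen part, let $Y=\cogenz(\varphi)\setminus\{x\}$. Then \eqref{eq:pi_overline_C}, or directly Lemma~\ref{lem:Cx_overline}(1), gives $\pi_Y(\overline{\C_x(||\varphi||)})\subseteq\pi_Y(\overline{||\varphi||})$, and the induction hypothesis finishes this case.

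The main obstacle is the gen case for conjunction. There one must pass from a statement about $\epsilon$-images back to projections, and correctly invoke the closure under $\eq(\varphi\wedge\psi)$ through Lemma~\ref{lem:eq_dom}. A second point needing care is the bookkeeping of restricting projections to subsets in the other cases.
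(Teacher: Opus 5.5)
Your proof is correct and follows the paper's argument essentially step for step. It uses the same structural induction, with Lemmas~\ref{lem:cap_project}, \ref{lem:cap_dom} and \ref{lem:eq_dom} for the generator part of conjunction, distribution of projection over union for the cogenerator part, and the two projection observations (correctly read with $\C_X$) for the quantifier; if anything, you are more careful than the printed proof, since you make explicit the passage from $\epsilon$-images back to projections (where what is really used is that $\epsilon$ reflects inclusion between relations over the same scheme, true because $M\neq\emptyset$ lets every tuple extend to a valuation) and you correctly use $\overline{||\varphi||}$ rather than $||\varphi||$ in the cogenerator step for $(\exists x)\varphi$.
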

\begin{proof}
We prove the assertion by structural induction on formulas. Clearly, it holds for any atomic formula. Suppose that the assertion holds for formulas $\varphi$ and $\psi$. We have
\begin{align*}
\pi_{\genz(\neg\varphi)}(||\neg\varphi||)=\pi_{\cogenz(\varphi)}(\overline{||\varphi||})\subseteq\A(M)^{\cogenz(\varphi)}=\A(M)^{\genz(\neg\varphi)}
\end{align*}
and 
\begin{align*}
\pi_{\cogenz(\neg\varphi)}(\overline{||\neg\varphi||})&=\pi_{\genz(\varphi)}(\overline{\overline{||\varphi||}})=\pi_{\genz(\varphi)}(||\varphi|||)\\
&\subseteq\A(M)^{\genz(\varphi)}=\A(M)^{\cogenz(\neg\varphi)}.
\end{align*}
Therefore, the assertion holds for negated formulas.
From the assumptions $\pi_{\genz(\varphi)}(||\varphi||)\subseteq \A(M)^{\genz(\varphi)}$ and $\pi_{\genz(\psi)}(||\psi||)\subseteq \A(M)^{\genz(\psi)}$, it follows that 
\begin{align*}
\epsilon(\pi_{\genz(\varphi)}(||\varphi||))\cap\epsilon(\pi_{\genz(\psi)}(||\psi||))\subseteq\epsilon(\A(M)^{\genz(\varphi)})\cap\epsilon(\A(M)^{\genz(\psi)}).
\end{align*}
From Lemmas \ref{lem:cap_project} and \ref{lem:cap_dom}, it follows that
\begin{align*}
\pi_{\genz(\varphi)\cup\genz(\psi)}(||\varphi\wedge\psi||)\subseteq\A(M)^{\genz(\varphi)\cup\genz(\psi)}. 
\end{align*}
From Lemma \ref{lem:eq_dom}, we obtain
\begin{align*}
\pi_{\genz(\varphi\wedge\psi)}(||\varphi\wedge\psi||)\subseteq\A(M)^{\genz(\varphi\wedge\psi)}. 
\end{align*}
From the assumptions 
$$\pi_{\cogenz(\varphi)}(\overline{||\varphi||})\subseteq \A(M)^{\cogenz(\varphi)}$$ 
and 
$$\pi_{\cogenz(\psi)}(\overline{||\psi||})\subseteq \A(M)^{\cogenz(\psi)}$$
together with the equality $\cogenz(\varphi\wedge\psi)=\cogenz(\varphi)\cap\cogenz(\psi)$, we obtain 
\begin{align*}
\pi_{\cogenz(\varphi\wedge\psi)}(\overline{||\varphi||})&\subseteq \A(M)^{\cogenz(\varphi\wedge\psi)},\\
\pi_{\cogenz(\varphi\wedge\psi)}(\overline{||\psi||})&\subseteq \A(M)^{\cogenz(\varphi\wedge\psi)}. 
\end{align*}
It follows that
\begin{align*}
\pi_{\cogenz(\varphi\wedge\psi)}(\overline{||\varphi\wedge\psi||})&= \pi_{\cogenz(\varphi\wedge\psi)}(\overline{||\varphi||}\cup\overline{||\psi||})\\
&=\pi_{\cogenz(\varphi\wedge\psi)}(\overline{||\varphi||})\cup\pi_{\cogenz(\varphi\wedge\psi)}(\overline{||\psi||})\\
&\subseteq\A(M)^{\cogenz(\varphi\wedge\psi)}.
\end{align*}
We have shown that the assertion also holds for conjoined formulas.

From the assumption $\pi_{\genz(\varphi)}(||\varphi||)\subseteq \A(M)^{\genz(\varphi)}$, we obtain
\begin{align*}
\pi_{\genz(\varphi)\setminus\{x\}}(||\varphi||)\subseteq \A(M)^{\genz(\varphi)\setminus\{x\}}. 
\end{align*}
Using the equality \eqref{eq:pi_C_disjoint}, it follows that
\begin{align*}
\pi_{\genz((\exists x)\varphi)}(||(\exists x)\varphi||)&=\pi_{\genz(\varphi)\setminus\{x\}}(\C_x(||\varphi||))=\pi_{\genz(\varphi)\setminus\{x\}}(||\varphi||)\\
&\subseteq\A(M)^{\genz(\varphi)\setminus\{x\}}=\A(M)^{\genz((\exists x)\varphi)}.
\end{align*}

From the assumption $\pi_{\cogenz(\varphi)}(\overline{||\varphi||})\subseteq \A(M)^{\cogenz(\varphi)}$, we obtain
\begin{align*}
\pi_{\cogenz(\varphi)\setminus\{x\}}(\overline{||\varphi||})\subseteq \A(M)^{\cogenz(\varphi)\setminus\{x\}}. 
\end{align*}
Using the equality \eqref{eq:pi_overline_C}, it follows that
\begin{align*}
\pi_{\cogenz((\exists x)\varphi)}(\overline{||(\exists x)\varphi||})&=\pi_{\cogenz(\varphi)\setminus\{x\}}(\overline{\C_x(||\varphi||)})\subseteq\pi_{\cogenz(\varphi)\setminus\{x\}}(||\varphi||)\\
&\subseteq\A(M)^{\cogenz(\varphi)\setminus\{x\}}=\A(M)^{\cogenz((\exists x)\varphi)}.
\end{align*}
We have shown that the assertion also holds for existentially quantified formulas.
\end{proof}

Following Gelder and Topor \cite{GelderTopoer:safety}, we define the allowed formulas accordingly.
A formula $\varphi$ is called \emph{allowed} if $\FV(\varphi)=\genz(\varphi)$ and every subformula of $\varphi$ of the form $(\exists x)\psi$ satisfies $x\in\genz(\psi)$.

\section{Normalization}\label{sec:normalization}
This section addresses the construction of a normalized formula semantically equivalent to a given allowed formula. We start by introducing a derived logical connective. We denote by $\varphi\vee^*\psi$  the formula  $(\exists X\setminus Y)\varphi\vee(\exists Y\setminus X)\psi,$ where $X=\FV(\varphi)$ and $Y=\FV(\psi)$.

We describe, for a given formula, the normalized formula which restricts elements of its value. For a formula $\varphi$ we define formulas $\gen(\varphi)$ and $\cogen(\varphi)$, called the \emph{generator} and \emph{cogenerator} of $\varphi$, respectively, by the following rules.
\begin{enumerate}
\item $\gen(1)=\cogen(1)=1$;
\item $\gen(r(x_1,\ldots,x_n))=r(x_1,\ldots,x_n)$;
\item $\cogen(r(x_1,\ldots,x_n))=1$;
\item $\gen(x\approx y)=1$;
\item $\cogen(x\approx y)=1$;
\item $\gen(\neg\varphi)=\cogen(\varphi)$;
\item $\cogen(\neg\varphi)=\gen(\varphi)$;
\item $\gen(\varphi\wedge\psi)=\gen(\varphi)\wedge\gen(\psi)\wedge(\varphi\approx_\wedge\psi)$;
\item $\cogen(\varphi\wedge\psi)=\cogen(\varphi)\vee^*\cogen(\psi)$;
\item $\gen((\exists x)\varphi)=(\exists x)\gen(\varphi)$;
\item $\cogen((\exists x)\varphi)=(\exists x)\cogen(\varphi)$.
\end{enumerate}
It remains to define the formula $\varphi\approx_\wedge\psi$ occurring in the eighth point. A \emph{minimal representation} of an equivalence $E$ on $\X$ is a minimal binary relation $R$ on $\X$ (with respect to the set inclusion) such that $\Eq(R)=E$. We denote by $(\varphi_1\approx_\wedge\varphi_2)$ the formula $(x_1\approx y_1)\wedge\cdots\wedge (x_n\approx y_n)$, where $\{\tu{x_i,y_i}\in\X\times\X\st 1\leq i \leq n\}$ is a minimal representation of $\Eq(\eq(\varphi_1\wedge\varphi_2)\setminus\eq(\gen(\varphi_1)\wedge\gen(\varphi_2)))$. If the minimal representation is the empty set, we set $(\varphi_1\approx_\wedge\varphi_2)$ equal to $1$. For each equivalence on $\X$, we fix one of its minimal representations. For example, if $\varphi_1=r(x_1,x_2)$ and $\varphi_2=x_1\approx x_2$, then $\varphi_1\approx_\wedge\varphi_2=x_1\approx x_2$, since $\eq(\varphi_1\wedge\varphi_2)=\Eq(\{\tu{x_1,x_2}\})$ and $\eq(\gen(\varphi_1)\wedge\gen(\varphi_2))=\eq(r(x_1,x_2)\wedge1)=\Eq(\emptyset)$.

Mappings $\gen$ and $\cogen$ are generalisations of $\genz$ and $\cogenz$, respectively, in the following sense:
$\FV(\gen(\varphi))=\genz(\varphi)$ and $\FV(\cogen(\varphi))=\cogenz(\varphi)$ holds for every formula $\varphi$.

For example, consider relational division, which can be expressed by the formula $\varphi=(\exists y)s(x,y)\wedge(\forall y)(r(y)\to s(x,y))$. Without abbreviations, the formula $\varphi$ is $(\exists y)s(x,y)\wedge\neg(\exists y)\neg(\neg r(y)\vee s(x,y))$. We have
\begin{align*}
&\gen((\exists y)s(x,y)\wedge\neg(\exists y)\neg(\neg r(y)\vee s(x,y)))\\
=&(\exists y)s(x,y)\wedge\gen(\neg r(y)\vee s(x,y))\wedge 1\\
=&(\exists y)s(x,y)\wedge\gen(\neg(\neg\neg r(y)\wedge\neg s(x,y)))\wedge 1\\
=&(\exists y)s(x,y)\wedge(\cogen(\neg\neg r(y))\vee^*\cogen(\neg s(x,y)))\wedge 1\\
=&(\exists y)s(x,y)\wedge(1\vee^*s(x,y))\wedge 1\\
=&(\exists y)s(x,y)\wedge(1\vee(\exists x)(\exists y)s(x,y))\wedge 1.
\end{align*}
Although it is not necessary, we can simplify the result by the following rules, which we call \emph{tautology rules}:
\begin{enumerate}
\item $(\exists x)1$ simplifies to $1$,
\item $1\wedge \varphi$ and $\varphi\wedge 1$ simplify to $\varphi$,
\item $1\vee\psi$ and $\varphi\vee 1$ simplify to $1$.  
\end{enumerate}
After simplification, the generator of $\varphi$ is $(\exists y)s(x,y)$.

The following lemma shows that the cogenerators of positive formulas and the generators of negative formulas are trivial.
\begin{lem}\label{lem:pos_neg_cogen_gen}
If $\varphi$ is positive, then $||\cogen(\varphi)|| = \V$; if it is negative, then $||\gen(\varphi)|| = \V$.
\end{lem}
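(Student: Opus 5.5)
The argument is a simultaneous structural induction on $\varphi$, proving both claims at once: if $\varphi$ is positive then $||\cogen(\varphi)||=\V$, and if $\varphi$ is negative then $||\gen(\varphi)||=\V$. Disjunctions are treated as $\neg(\neg\varphi\wedge\neg\psi)$, as the paper stipulates. I first record two facts. If $||\chi||=\V$, then $||(\exists x)\chi||=\C_x(\V)=\V$. If $||\chi_1||=\V$ or $||\chi_2||=\V$, then $||\chi_1\vee^*\chi_2||=\V$, because $\C_X(\V)=\V$ for every finite set of variables $X$ and the value of a disjunction is the union of the values of its disjuncts.

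The base case is atomic formulas, which are all positive. Here the cogenerator is $1$ in each case ($\cogen(1)=\cogen(r(\ldots))=\cogen(x\approx y)=1$), so its value is $\V$.

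For negation, suppose first that $\neg\varphi$ is positive. By the definition of positive formulas this happens exactly when $\varphi$ is not positive, i.e. $\varphi$ is negative. Then $\cogen(\neg\varphi)=\gen(\varphi)$, and the induction hypothesis for the negative formula $\varphi$ gives value $\V$. Suppose instead that $\neg\varphi$ is negative. Then $\varphi$ is positive, and $\gen(\neg\varphi)=\cogen(\varphi)$ has value $\V$ by the induction hypothesis.

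For the existential quantifier, $(\exists x)\varphi$ is positive exactly when $\varphi$ is positive. In that case $\cogen((\exists x)\varphi)=(\exists x)\cogen(\varphi)$, which has value $\V$ by the induction hypothesis and the first fact. If $(\exists x)\varphi$ is negative, then $\varphi$ is negative, and $\gen((\exists x)\varphi)=(\exists x)\gen(\varphi)$ has value $\V$ in the same way.

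For conjunction, suppose $\varphi\wedge\psi$ is positive. Then one of the conjuncts, say $\varphi$, is positive, so $||\cogen(\varphi)||=\V$. Since $\cogen(\varphi\wedge\psi)=\cogen(\varphi)\vee^*\cogen(\psi)$, the second fact gives value $\V$.

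Now suppose $\varphi\wedge\psi$ is negative. Then both $\varphi$ and $\psi$ are negative, so $||\gen(\varphi)||=||\gen(\psi)||=\V$. It remains to show that $||\varphi\approx_\wedge\psi||=\V$, i.e. that the equivalence $\Eq(\eq(\varphi\wedge\psi)\setminus\eq(\gen(\varphi)\wedge\gen(\psi)))$ is the identity, so that its minimal representation is empty and the formula is $1$. By Lemma~\ref{lem:pos_neg_coeq_eq}, $\eq(\varphi)$ and $\eq(\psi)$ are both $\id_\X$. Hence $\eq(\varphi\wedge\psi)=\Eq(\id_\X\cup\id_\X)=\id_\X$. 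Every equivalence contains $\id_\X$, so the set difference is empty, and $\Eq(\emptyset)=\id_\X$. Its minimal representation is therefore $\emptyset$, and $\varphi\approx_\wedge\psi$ is $1$.

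This conjunction-negative case is the only step that is more than bookkeeping. The point to check is that "minimal representation of $\id_\X$" really is $\emptyset$. It is, because $\emptyset$ is the least relation whose generated equivalence is $\id_\X$.
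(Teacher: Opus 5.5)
Your proof is correct and is essentially the paper's own argument. Both are a structural induction over atomic formulas, negations, conjunctions and existential quantifiers. In the negative-conjunction case both use Lemma~\ref{lem:pos_neg_coeq_eq} to see that $\eq(\varphi\wedge\psi)$ is the identity, so that $\varphi\approx_\wedge\psi$ is $1$. You add some care the paper skips: you check explicitly that the minimal representation is $\emptyset$, and you argue with values, so $\cogen(\varphi)$ only needs value $\V$ rather than literally being $1$.
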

\begin{proof}
The assertion holds trivially for atomic formulas, as they are positive. Suppose that $\varphi$ is positive. Then $\neg\varphi$ is negative, and we have $||\gen(\neg\varphi)||=||\cogen(\varphi)||=\V$. Similarly, it can be shown that the assertion holds also for $\neg\varphi$, provided that $\varphi$ is negative. 

Suppose that $\varphi$ is positive. Then $\varphi\wedge\psi$ is positive as well. We have $||\cogen(\varphi\wedge\psi)||=||\cogen(\varphi)\vee^*\cogen(\psi)||=||1\vee^*\cogen(\psi)||=\V$. The same conclusion holds under the assumption that $\psi$ is positive.

Suppose that both $\varphi$ and $\psi$ are negative. Then $\varphi\wedge\psi$ is also negative and, by Lemma \ref{lem:pos_neg_coeq_eq}, $\eq(\varphi\wedge\psi)$ is the identity on $\X$. Therefore, we have $||\gen(\varphi\wedge\psi)||=||\gen(\varphi)||\cap||\gen(\psi)||=\V\cap\V=\V$.

Suppose that $\varphi$ is positive. Then $(\exists x)\varphi$ is also positive. We have $||\cogen((\exists x)\varphi)||=\C_x(||\cogen(\varphi)||)=\C_x(\V)=\V$. Similarly, if  $\varphi$ is negative, then $||\gen((\exists x)\varphi)||=\V$. 
\end{proof}

At this point, we can precisely describe how generators and cogenerators restrict values of formulas.
\begin{thm}\label{thm:gen}
It holds
\begin{align*}
||\varphi||&\subseteq||\gen(\varphi)||,\\
\overline{||\varphi||}&\subseteq||\cogen(\varphi)||. 
\end{align*}
 \end{thm}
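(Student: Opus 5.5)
We prove both inclusions simultaneously by structural induction on $\varphi$, with the induction hypothesis consisting of the pair of statements $||\varphi||\subseteq||\gen(\varphi)||$ and $\overline{||\varphi||}\subseteq||\cogen(\varphi)||$. Disjunction needs no separate treatment, since $\varphi\vee\psi$ is read as $\neg(\neg\varphi\wedge\neg\psi)$.

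\emph{Atomic formulas and negation.} For atomic formulas the claims are immediate. Whenever the generator or cogenerator is $1$, its value is $\V$. For a relation formula we have $\gen(r(x_1,\ldots,x_n))=r(x_1,\ldots,x_n)$, and the inclusion is an equality. The negation step is a swap:
\begin{align*}
||\neg\varphi||=\overline{||\varphi||}\subseteq||\cogen(\varphi)||=||\gen(\neg\varphi)||,
\end{align*}
and symmetrically $\overline{||\neg\varphi||}=||\varphi||\subseteq||\gen(\varphi)||=||\cogen(\neg\varphi)||$.

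\emph{Existential quantifier.} For the generator, monotonicity of $\C_x$ gives
\begin{align*}
||(\exists x)\varphi||=\C_x(||\varphi||)\subseteq\C_x(||\gen(\varphi)||)=||\gen((\exists x)\varphi)||.
\end{align*}
For the cogenerator we use Lemma~\ref{lem:Cx_overline}(3) followed by monotonicity:
\begin{align*}
\overline{\C_x(||\varphi||)}\subseteq\C_x(\overline{||\varphi||})\subseteq\C_x(||\cogen(\varphi)||)=||\cogen((\exists x)\varphi)||.
\end{align*}

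\emph{Conjunction, cogenerator part.} Here $\overline{||\varphi\wedge\psi||}=\overline{||\varphi||}\cup\overline{||\psi||}$. By the induction hypothesis, $\overline{||\varphi||}\subseteq||\cogen(\varphi)||\subseteq\C_{X\setminus Y}(||\cogen(\varphi)||)$, where $X,Y$ are the free variables of the two cogenerators. This uses only that cylindrification is extensive, and the same works for $\psi$. Hence the union is contained in $||\cogen(\varphi)\vee^*\cogen(\psi)||$.

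\emph{Conjunction, generator part (the main step).} The value of the generator is
\begin{align*}
||\gen(\varphi)||\cap||\gen(\psi)||\cap||\varphi\approx_\wedge\psi||.
\end{align*}
The first two factors contain $||\varphi||\cap||\psi||$ by the induction hypothesis, so it remains to show $||\varphi\wedge\psi||\subseteq\D_{xy}$ for every pair $\tu{x,y}$ in the fixed minimal representation of
\begin{align*}
\Eq(\eq(\varphi\wedge\psi)\setminus\eq(\gen(\varphi)\wedge\gen(\psi))).
\end{align*}
Every such pair lies in this equivalence, and the equivalence is contained in $\eq(\varphi\wedge\psi)$, since the latter is an equivalence containing the generating set. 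Lemma~\ref{lem:eq_coeq} then gives $||\varphi\wedge\psi||\subseteq\D_{xy}$, so the intersection of these diagonals contains $||\varphi\wedge\psi||$. The empty representation, where the formula is $1$, is trivial.

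The only care needed is to confirm that membership in the minimal representation implies membership in $\eq(\varphi\wedge\psi)$; once that is settled, the rest is routine.
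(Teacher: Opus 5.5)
Your proof is correct and follows the paper's argument essentially step for step. Both use simultaneous structural induction, the negation swap, monotonicity together with $\overline{\C_x(V)}\subseteq\C_x(\overline V)$ for the quantifier, extensivity of cylindrification for the cogenerator of a conjunction, and Lemma~\ref{lem:eq_coeq} for the $\approx_\wedge$ conjuncts; you also make explicit a step the paper leaves implicit, namely that each pair of the minimal representation lies in $\eq(\varphi\wedge\psi)$.
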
 
\begin{proof}
We prove the assertion by structural induction on formulas. Clearly, it holds for any atomic formula. 

Suppose that the assertion holds for formulas $\varphi$ and $\psi$. We have 
\begin{align*}
||\neg\varphi||=\overline{||\varphi||}\subseteq||\cogen(\varphi)||=||\gen(\neg\varphi)||
\end{align*}
and
\begin{align*}
\overline{||\neg\varphi||}=\overline{\overline{||\varphi||}}=||\varphi||\subseteq||\gen(\varphi)||=||\cogen(\neg\varphi)||. 
\end{align*}
We showed that the assertion holds also for negated formulas. We have
\begin{align*}
||(\exists x)\varphi||=\C_x(||\varphi||)\subseteq\C_x(||\gen(\varphi)||)=||(\exists x)\gen(\varphi)||=||\gen((\exists x)\varphi)||
\end{align*}
and
\begin{align*}
\overline{||(\exists x)\varphi||}&=\overline{\C_x(||\varphi||)}\subseteq\overline{||\varphi||}\subseteq\C_x(\overline{||\varphi||})\subseteq\C_x(||\cogen(\varphi)||)\\
&=||(\exists x)\cogen(\varphi)||=||\cogen((\exists x)\varphi)||.
\end{align*}
We have shown that the assertion also holds for existentially quantified formulas.

By Lemma \ref{lem:eq_coeq}, we have
\begin{align*}
||\varphi\wedge\psi||&=||\varphi||\cap||\psi||=||\varphi||\cap||\psi||\cap\D_{x_1y_1}\cap\cdots\cap\D_{x_ny_n}\\
&\subseteq||\gen(\varphi)||\cap||\gen(\psi)||\cap\D_{x_1y_1}\cap\cdots\cap\D_{x_ny_n}\\
&=||\gen(\varphi)\wedge\gen(\psi)\wedge(x_1\approx y_1)\wedge\cdots\wedge (x_n\approx y_n)||\\
&=||\gen(\varphi\wedge\psi)||,
\end{align*}
 where $\{\tu{x_1,y_1},\ldots,\tu{x_n,y_n}\}$ is a minimal representation of $\Eq(\eq(\varphi_1\wedge\varphi_2)\setminus\eq(\gen(\varphi_1)\wedge\gen(\varphi_2)))$.

We have
\begin{align*}
\overline{||\varphi\wedge\psi||}&= \overline{||\varphi||\cap||\psi||}=\overline{||\varphi||}\cup\overline{||\psi||}\subseteq||\cogen(\varphi)||\cup||\cogen(\psi)||\\
&\subseteq\C_{X\setminus Y}(||\cogen(\varphi)||)\cup\C_{Y\setminus X}(||\cogen(\psi)||)\\
&=||(\exists X\setminus Y)\cogen(\varphi)\vee(\exists Y\setminus X)\cogen(\psi)||\\
&=||\cogen(\varphi\wedge\psi)||,
\end{align*}
where $X=\FV(\cogen(\varphi))$ and $Y=\FV(\cogen(\psi))$.

We have shown that the assertion also holds for conjoined formulas.
\end{proof}

We next consider the normalization of formulas. To handle double negation correctly during normalization, we introduce the following notion of the complement of a formula. For a formula $\varphi$, we define the \emph{complement} $\overline \varphi$ of $\varphi$ by
\[
\overline\varphi =
\begin{cases} 
\neg\varphi\quad&\textrm{ if }\varphi\textrm{ is not negated },\\
\varphi_1\quad&\textrm{ if }\varphi\textrm{ is a formula of the form }\neg\varphi_1.
\end{cases}
\] 

Clearly, $||\overline\varphi||=\overline{||\varphi||}$, $\eq(\overline\varphi)=\coeq(\varphi)$, and $\coeq(\overline\varphi)=\eq(\varphi)$.

During normalization, we will need to align the free variables of a given formula. Let $\varphi$ and $\psi$ be formulas satisfying $\FV(\varphi)\subseteq\FV(\psi)$. Then, by \emph{alignment} of a formula $\varphi$ via $\psi$, we mean the formula $\al(\varphi,\psi)$ given by:
\begin{enumerate}
\item $\al(\varphi,\psi)=\varphi$, if $\FV(\varphi)=\FV(\psi)$;
\item $\al(\varphi,\psi)=\varphi\wedge(\exists \FV(\varphi))\psi$, otherwise.  
\end{enumerate}
The following containment is an immediate consequence of the definition:
\begin{align}
||\al(\varphi,\psi)||\subseteq||\varphi||.\label{eq:align_subseteq}
\end{align}
Clearly, $\eq(\al(\varphi,\psi))\supseteq\eq(\varphi)$.

\begin{lem}\label{lem:align}
If $\FV(\varphi)\subseteq\FV(\psi)$, then  $||\al(\varphi,\psi)||\cap||\psi||=||\varphi||\cap||\psi||$.
\end{lem}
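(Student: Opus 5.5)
We argue by cases on the definition of $\al(\varphi,\psi)$. If $\FV(\varphi)=\FV(\psi)$, then $\al(\varphi,\psi)=\varphi$ and there is nothing to prove. So the only substantive case is $\FV(\varphi)\subsetneq\FV(\psi)$, where $\al(\varphi,\psi)=\varphi\wedge(\exists \FV(\varphi))\psi$. By the semantic rules for conjunction and the equality $||(\exists X)\chi||=\C_X(||\chi||)$ noted earlier, we get
\begin{align*}
||\al(\varphi,\psi)||\cap||\psi||=||\varphi||\cap\C_{\FV(\varphi)}(||\psi||)\cap||\psi||.
\end{align*}

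The key step is the absorption $\C_{\FV(\varphi)}(||\psi||)\cap||\psi||=||\psi||$. This follows because each cylindrification $\C_x$ is a closure operator, hence extensive: $V\subseteq\C_x(V)$, which is also axiom (3) of cylindric algebras. Iterating over the variables of $\FV(\varphi)$ gives $V\subseteq\C_{\FV(\varphi)}(V)$ for $V=||\psi||$. Substituting $||\psi||\subseteq\C_{\FV(\varphi)}(||\psi||)$ into the display yields $||\varphi||\cap||\psi||$, which is the claim.

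For the inclusion $\subseteq$ we could alternatively cite \eqref{eq:align_subseteq}; the reverse inclusion is exactly the extensivity argument above. Note that the hypothesis $\FV(\varphi)\subseteq\FV(\psi)$ is not used in the computation; it only ensures that $\al(\varphi,\psi)$ is defined. There is no real obstacle. The only point needing care is to make explicit that $\C_X$, being a composition of extensive operators, is extensive, which is a one-line induction on $|X|$.
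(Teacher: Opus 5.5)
Your proof is correct and follows the paper's argument exactly. You handle the trivial case $\FV(\varphi)=\FV(\psi)$, and otherwise compute $||\varphi||\cap\C_{\FV(\varphi)}(||\psi||)\cap||\psi||=||\varphi||\cap||\psi||$; the only difference is that you make explicit the extensivity $||\psi||\subseteq\C_{\FV(\varphi)}(||\psi||)$, which the paper uses without comment in its last step.
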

\begin{proof}
If $\FV(\varphi)=\FV(\psi)$, then the assertion holds trivially.
Suppose that $\FV(\varphi)\subset \FV(\psi)$. Then we have
\begin{align*}
||\al(\varphi,\psi)||\cap||\psi||&=||\varphi\wedge(\exists \FV(\varphi))\psi||\cap||\psi||=||\varphi||\cap\C_{\FV(\varphi)}(||\psi||)\cap||\psi||\\
&=||\varphi||\cap||\psi||.  \qedhere
\end{align*} 
 
\end{proof}

We are ready to describe how a formula is normalized.
We assign to a formula $\varphi$ and a normalized formula $\psi$ such that $\FV(\varphi)=\FV(\psi)$ the formula $\norm(\varphi,\psi)$ called the \emph{normalization} of $\varphi$ using $\psi$ by the following rules:
\begin{enumerate}
\item if $\varphi=r(x_1,\ldots,x_n)$, then $\norm(\varphi,\psi)=r(y_1/x_1,\ldots,y_n/x_n)$, where $r(y_1,\ldots,y_n)\in\F$; 
\item if $\varphi=x_1\approx x_2$, then $\norm(\varphi,\psi)=\psi\wedge(x_1\approx x_2)$;
\item if $\varphi=\neg\varphi_1$, then $\norm(\varphi,\psi)=\overline{\norm(\varphi_1,\psi)}$;
\item if $\varphi=\varphi_1\wedge\varphi_2$, then denote $\alpha_1=\norm(\varphi_1,\psi_1)$, $\alpha_2=\norm(\varphi_2,\psi_2)$, where $\psi_1=(\exists\FV(\psi)\setminus\FV(\varphi_1))\psi$ and $\psi_2=(\exists\FV(\psi)\setminus\FV(\varphi_2))\psi$, and 
\begin{enumerate}
\item  if both $\alpha_1$ and $\alpha_2$ are not negated, then $\norm(\varphi,\psi)=\alpha_1\wedge\alpha_2$;
\item  if  $\alpha_1$ is not negated and $\alpha_2$ is negated, then 
$$\norm(\varphi,\psi)=\al(\alpha_1,\psi)\wedge\neg\al(\overline{\alpha_2},\psi);$$ 
\item if  $\alpha_2$ is not negated and $\alpha_1$ is negated, then 
$$\norm(\varphi,\psi)=\al(\alpha_2,\psi)\wedge\neg\al(\overline{\alpha_1},\psi);$$

\item  if both $\alpha_1$  and $\alpha_2$ are negated, then 
$$\norm(\varphi,\psi)=\neg(\al(\overline{\alpha_1},\psi)\vee\al(\overline{\alpha_2},\psi));$$
\end{enumerate}
\item if $\varphi=(\exists x)\varphi_1$, then
$$\norm(\varphi,\psi)=(\exists x)\norm(\varphi_1,(\exists X)\psi_1\wedge\psi),$$
where $\psi_1=\gen(\varphi_1)$ and $X=\FV(\psi_1)\setminus\{x\}$.
\end{enumerate}
By inspection of the rules just described, we see that the formula $\norm(\varphi,\psi)$ is normalized. By the \emph{normalisation} of a formula $\varphi$, we mean the formula $\norm(\varphi,\gen(\varphi))$. For example, we compute the normalisation of the relational division formula discussed in Section~\ref{sec:normalization}:
\begin{align}
\varphi_1=(\exists y)s(x,y)\wedge\neg(\exists y)(r(y)\wedge\neg s(x,y)).
\end{align}
In order to improve the readability of the example, we use simplification rules for the tautology $1$ and simplifications made during the normalization of atomic formulas. We already know that the generator $\gen(\varphi_1)$ can be simplified to $\psi_1=(\exists y)s(x,y)$. Normalization proceeds as follows.
\begin{enumerate}
\item Since $\varphi_1$ is equal $\varphi_2\wedge\varphi_3$, where $\varphi_2=(\exists y)s(x,y)$ and $\varphi_3=\neg(\exists y)(r(y)\wedge\neg s(x,y))$, we need to compute $\norm(\varphi_2,\psi_1)$ and $\norm(\varphi_3,\psi_1)$. 
\item We have $\norm(\varphi_2,\psi_1)=(\exists y)\norm(s(x,y),(\exists x)s(x,y)\wedge\psi_1)=s(x/x,y/y)$, which simplifies to $s(x,y)$.
\item We have $\norm(\varphi_3,\psi_1)=\overline{\norm(\varphi_4,\psi_1)}$, where $\varphi_4$ is $(\exists y)(r(y)\wedge\neg s(x,y))$.
\item In order to compute $\norm(\varphi_4,\psi_1)$, we need to compute $\gen(r(y)\wedge\neg s(x,y))=r(y)\wedge1$, which simplifies to $r(y)$. Now,
\begin{align*}
\norm(\varphi_4,\psi_1)=(\exists y)(\norm(r(y)\wedge\neg s(x,y),r(y)\wedge\psi_1)).
\end{align*}
We need to compute $\norm(r(y),(\exists x)(r(y)\wedge\psi_1))$ and $\norm(\neg s(x,y),r(y)\wedge\psi_1)$.
\item We have $\norm(r(y),(\exists x)(r(y)\wedge\psi_1))=r(y/y)$, which simplifies to $r(y)$ \\and $\norm(\neg s(x,y),r(y)\wedge\psi_1)=\overline{\norm(s(x,y),r(y)\wedge\psi_1)}=\neg{s(x/x,y/y)}$, which simplifies to $\neg s(x,y)$.
\item We can continue with step four and obtain
\begin{align*}
\norm(\varphi_4,\psi_1)&=(\exists y)\norm(r(y)\wedge\neg s(x,y),r(y)\wedge\psi_1)\\
&=(\exists y)(\al(r(y),r(y)\wedge\psi_1)\wedge\neg\al(s(x,y),r(y)\wedge\psi_1))\\
&=(\exists y)((r(y)\wedge(\exists y)(r(y)\wedge\psi_1))\wedge\neg s(x,y)). 
\end{align*}
Denote the last formula by $\varphi_5$. Therefore,
\begin{align*}
\norm(\varphi_3,\psi_1)=\overline{\norm(\varphi_4,\psi_1)}=\neg\varphi_5.
\end{align*}
\item We can continue with the first step:
\begin{align*}
\norm(\varphi_1,\psi_1)&=\norm(\varphi_2\wedge\varphi_3,\psi_1)\\
&=\al((\exists y)s(x,y),\psi_1)\wedge\neg\al(\varphi_5,\psi_1)\\
&=(\exists y)s(x,y)\wedge\neg\varphi_5\\
&=(\exists y)s(x,y)\wedge\neg(\exists y)((r(y)\wedge(\exists y)(r(y)\wedge(\exists y)s(x,y)))\wedge\neg s(x,y))
\end{align*}
\end{enumerate}
Using third and fourth axiom of cylindric algebras we can simplify $\norm(\varphi_1,\psi_1)$ and obtain:
$$\varphi_6=(\exists y)s(x,y)\wedge\neg(\exists y)((r(y)\wedge(\exists y)s(x,y))\wedge\neg s(x,y))$$
Since $\varphi_6$ is normalized, we can obtain semantically equivalent relational expression:
\begin{align*}
\expr(\varphi_6)=\pi_{\{x\}}(s)-\pi_{\{x\}}((r\bowtie\pi_{\{x\}}(s))-s).
\end{align*}
Which is the standard definition of relational division.

We observe that $\norm(\varphi,\psi)$ is positive if it is not negated, and negative otherwise. Another observation is that $\varphi$ is positive if and only if $\norm(\varphi, \psi)$ is positive.

The following lemma shows that normalization preserves equality and coequality of variables in formulas.
\begin{lem}\label{lem:eq_coeq_norm}
It holds $\eq(\varphi)\subseteq\eq(\norm(\varphi,\psi))$ and $\coeq(\varphi)\subseteq\coeq(\norm(\varphi,\psi))$.  
\end{lem}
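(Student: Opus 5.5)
We prove both inclusions simultaneously by structural induction on $\varphi$, for an arbitrary normalized $\psi$ with $\FV(\psi)=\FV(\varphi)$. Throughout we use three facts.
\begin{enumerate}
\item $\eq$ is monotone under adding conjuncts: $\eq(\alpha\wedge\beta)\supseteq\eq(\alpha)$. In particular $\eq(\al(\alpha,\psi))\supseteq\eq(\alpha)$.
\item $\eq(\overline\alpha)=\coeq(\alpha)$ and $\coeq(\overline\alpha)=\eq(\alpha)$.
\item Disjunction is read as $\neg(\neg\cdot\wedge\neg\cdot)$.
\end{enumerate}

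\textbf{Atomic cases.} If $\varphi=r(x_1,\ldots,x_n)$ or $\varphi=1$, both $\eq(\varphi)$ and $\coeq(\varphi)$ are $\id_\X$, so there is nothing to prove. If $\varphi=x_1\approx x_2$, then $\coeq(\varphi)=\id_\X$ and
\[
\eq(\norm(\varphi,\psi))=\Eq(\eq(\psi)\cup\eq(x_1\approx x_2))\supseteq\eq(\varphi).
\]

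\textbf{Negation.} Here $\norm(\neg\varphi_1,\psi)=\overline{\norm(\varphi_1,\psi)}$. By fact 2,
\[
\eq(\norm(\neg\varphi_1,\psi))=\coeq(\norm(\varphi_1,\psi))\supseteq\coeq(\varphi_1)=\eq(\neg\varphi_1),
\]
and the coequality inclusion follows symmetrically.

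\textbf{Existential quantifier.} Write $\psi'=(\exists X)\gen(\varphi_1)\wedge\psi$. The induction hypothesis gives $\eq(\varphi_1)\subseteq\eq(\norm(\varphi_1,\psi'))$. The operation $E\mapsto\Eq(E|_{(\X\setminus\{x\})^2})$ is monotone, so the inclusion transfers to $\eq((\exists x)\norm(\varphi_1,\psi'))$, and likewise for $\coeq$.

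\textbf{Conjunction.} Let $\alpha_i=\norm(\varphi_i,\psi_i)$; by induction $\eq(\varphi_i)\subseteq\eq(\alpha_i)$ and $\coeq(\varphi_i)\subseteq\coeq(\alpha_i)$. For the coequality part, $\coeq(\varphi_1\wedge\varphi_2)=\coeq(\varphi_1)\cap\coeq(\varphi_2)$, so it suffices that $\coeq$ of the result contains $\coeq(\alpha_1)\cap\coeq(\alpha_2)$. For the equality part, it suffices that $\eq$ of the result contains both $\eq(\alpha_1)$ and $\eq(\alpha_2)$, because $\eq$ of the result is an equivalence and therefore contains $\Eq(\eq(\alpha_1)\cup\eq(\alpha_2))$. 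We check the four subcases.
\begin{itemize}
\item[(a)] The result is $\alpha_1\wedge\alpha_2$, and both requirements are immediate from the rules for $\wedge$.
\item[(b)] The result is $\al(\alpha_1,\psi)\wedge\neg\al(\overline{\alpha_2},\psi)$. Its $\eq$ contains $\eq(\alpha_1)$ and
\[
\coeq(\al(\overline{\alpha_2},\psi))\supseteq\coeq(\overline{\alpha_2})=\eq(\alpha_2).
\]
This requires $\coeq(\al(\beta,\psi))\supseteq\coeq(\beta)$. That holds because $\alpha_2$ is negated, so $\overline{\alpha_2}$ is positive and $\coeq(\overline{\alpha_2})=\id$ by Lemma~\ref{lem:pos_neg_coeq_eq}, which makes the needed inclusion trivial. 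For $\coeq$ of the result, note that $\alpha_1$ is not negated, hence positive, so $\coeq(\alpha_1)=\id$ and the intersection $\coeq(\alpha_1)\cap\coeq(\alpha_2)$ is trivial.
\item[(c)] Symmetric to (b).
\item[(d)] Both $\alpha_i$ are negative, so $\eq(\alpha_i)=\id$ and the $\eq$ requirement is trivial. The result is $\neg(\beta_1\vee\beta_2)$ with $\beta_i=\al(\overline{\alpha_i},\psi)$, which unfolds to $\neg\neg(\neg\beta_1\wedge\neg\beta_2)$. Its $\coeq$ equals $\coeq(\neg\beta_1\wedge\neg\beta_2)=\coeq(\neg\beta_1)\cap\coeq(\neg\beta_2)$, and $\coeq(\neg\beta_i)=\eq(\beta_i)\supseteq\eq(\overline{\alpha_i})=\coeq(\alpha_i)$. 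This gives the required containment.
\end{itemize}

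The main obstacle is the bookkeeping in the mixed subcases (b)–(d), where negation is moved across $\al$. These cases rely on the positivity/negativity observation: a non-negated $\norm(\cdot,\cdot)$ is positive and a negated one is negative, so Lemma~\ref{lem:pos_neg_coeq_eq} makes one side of each needed inclusion trivial. We have to verify in each subcase that this triviality falls on the side where $\al$ could otherwise lose information.
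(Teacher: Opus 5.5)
Your proof is correct and follows the paper's argument: the same structural induction, the same four-way split in the conjunction case, and the same use of Lemma~\ref{lem:pos_neg_coeq_eq}. As in the paper, you combine that lemma with the observation that a negated normalization is negative and a non-negated one positive, which makes one inclusion trivial in the mixed subcases. The only cosmetic difference is in subcase (b): you trivialize via $\eq(\alpha_2)=\id_\X$ plus the induction hypothesis, whereas the paper argues directly that $\varphi_2$ and $\neg\al(\overline{\alpha_2},\psi)$ are negative.
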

\begin{proof}
We prove the assertion by structural induction on the formula $\varphi$.
\begin{enumerate}
\item Cases where $\varphi$ is of the form $r(x_1,\ldots,x_n)$ or $1$ are trivial, since in those cases $\eq(\varphi)=\coeq(\varphi)=\id_\X$.
\item Consider the case where $\varphi$ is of the form $x_1\approx x_2$. Then we have $\eq(\norm(x_1\approx x_2,\psi))=\eq(\psi\wedge(x_1\approx x_2))=\Eq(\eq(\psi)\cup\eq(x_1\approx x_2))\supseteq\eq(x_1\approx x_2)$. Clearly, $\coeq(x_1\approx x_2)=\id_\X\subseteq\coeq(\norm(\varphi,\psi))$.
\item Consider the case where $\varphi$ is of the form $\neg\varphi_1$. Then we have $\eq(\neg\varphi_1)=\coeq(\varphi_1)\subseteq\coeq(\norm(\varphi_1,\psi))=\eq(\overline{\norm(\varphi_1,\psi)})=\eq(\norm(\neg\varphi_1,\psi))$. Similarly can be shown that $\coeq(\neg\varphi_1)\subseteq\coeq(\norm(\neg\varphi_1,\psi))$.
\item Consider the case where $\varphi$ is of the form $\varphi_1\wedge\varphi_2$. Let $\alpha_1=\norm(\varphi_1,\psi_1)$ and $\alpha_2=\norm(\varphi_2,\psi_2)$, where $\psi_1=(\exists\FV(\psi)\setminus\FV(\varphi_1))\psi$ and $\psi_2=(\exists\FV(\psi)\setminus\FV(\varphi_2))\psi$. We distinguish four subcases. By the Lemma \ref{lem:pos_neg_coeq_eq}, it suffices to prove only one part of the claim, since the other is trivially true.
\begin{enumerate}
\item Consider the subcase in which both $\alpha_1$ and $\alpha_2$ are not negated. We have $\eq(\norm(\varphi_1\wedge\varphi_2))=\eq(\alpha_1\wedge\alpha_2)=\Eq(\eq(\alpha_1)\cup\eq(\alpha_2))\supseteq\Eq(\eq(\varphi_1)\cup\eq(\varphi_2))=\eq(\varphi_1\wedge\varphi_2)$.
\item Consider the subcase in which $\alpha_1$ is not negated and $\alpha_2$ is negated. We have 
\begin{align}
&\eq(\norm(\varphi_1\wedge\varphi_2,\psi))=\eq(\al(\alpha_1,\psi)\wedge\neg\al(\overline{\alpha_2},\psi))\notag\\
&=\Eq(\eq(\al(\alpha_1,\psi))\cup\eq(\neg\al(\overline{\alpha_2},\psi))) \notag\\
&=\eq(\al(\alpha_1,\psi))\label{eq:lem_eq_coeq_norm1}\\
&\supseteq\eq(\alpha_1)\supseteq\eq(\varphi_1)\notag\\
&=\Eq(\eq(\varphi_1)\cup\eq(\varphi_2))\label{eq:lem_eq_coeq_norm2}\\
&=\eq(\varphi_1\wedge\varphi_2).\notag
\end{align}
Since $\alpha_2$ is negative, $\neg\al(\overline{\alpha_2},\psi)$ is also negative. Thus, $\eq(\neg\al(\overline{\alpha_2},\psi))$ is the identity on $\X$, and the equality \eqref{eq:lem_eq_coeq_norm1} holds. Similarly, the equality \eqref{eq:lem_eq_coeq_norm2} holds, since $\varphi_2$ is negative.
\item The subcase where $\alpha_2$ is not negated and $\alpha_1$ is negated proceeds analogously to the previous one. 
\item Consider the subcase in which both $\alpha_1$ and $\alpha_2$ are negated. We have $\coeq(\norm(\varphi_1\wedge\varphi_2,\psi))=\coeq(\neg(\al(\overline{\alpha_1},\psi)\vee\al(\overline{\alpha_2},\psi)))=\coeq(\neg\al(\overline{\alpha_1},\psi)\wedge\neg\al(\overline{\alpha_2},\psi))\\=\eq(\al(\overline{\alpha_1},\psi))\cap\eq(\al(\overline{\alpha_2},\psi))\supseteq\eq(\overline{\alpha_1})\cap\eq(\overline{\alpha_2})=\coeq(\alpha_1)\cap\coeq(\alpha_2)\supseteq\coeq(\varphi_1)\cap\coeq(\varphi_2)=\coeq(\varphi_1\wedge\varphi_2)$.
\end{enumerate}
\item Consider the case where $\varphi$ is of the form $(\exists x)\varphi_1$. Let $\psi_1=\gen(\varphi_1)$ and $X=\FV(\psi_1)\setminus\{x\}$. We have $\eq(\norm((\exists x)\varphi_1,\psi))=\eq((\exists x)\norm(\varphi_1,(\exists X)\psi_1\wedge\psi))=\Eq(\eq(\norm(\varphi_1,(\exists X)\psi_1\wedge\psi))|_{(X\setminus\{x\})^2})\supseteq\Eq(\eq(\varphi_1)|_{(X\setminus\{x\})^2})=\eq((\exists x)\varphi_1)$. Similarly can be shown that $\coeq(\norm((\exists x)\varphi_1,\psi))\supseteq\eq((\exists x)\varphi_1)$.  \qedhere
\end{enumerate}
\end{proof}
Normalizing a conjunction preserves the equality of variables expressed by the connective $\approx_\wedge$.
\begin{lem}\label{lem:norm_eq_wedge}
It holds $||\norm(\varphi_1\wedge\varphi_2,\psi)||\subseteq||\varphi_1\approx_\wedge\varphi_2||$.   \qedhere
\end{lem}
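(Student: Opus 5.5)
The approach is to reduce the statement to a claim about equalities of variables and then use Lemmas \ref{lem:eq_coeq_norm} and \ref{lem:eq_coeq}. By definition, $\varphi_1\approx_\wedge\varphi_2$ is a conjunction $(x_1\approx y_1)\wedge\cdots\wedge(x_n\approx y_n)$ (or $1$). Its value is $\D_{x_1y_1}\cap\cdots\cap\D_{x_ny_n}$, where $\{\tu{x_i,y_i}\}$ is a minimal representation of $\Eq(\eq(\varphi_1\wedge\varphi_2)\setminus\eq(\gen(\varphi_1)\wedge\gen(\varphi_2)))$. If the representation is empty, the value is $\V$ and there is nothing to prove. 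Otherwise each pair $\tu{x_i,y_i}$ lies in that equivalence, and hence in $\eq(\varphi_1\wedge\varphi_2)$, because $\eq(\varphi_1\wedge\varphi_2)$ is an equivalence containing the generating set. So it suffices to show $||\norm(\varphi_1\wedge\varphi_2,\psi)||\subseteq\D_{xy}$ for every $\tu{x,y}\in\eq(\varphi_1\wedge\varphi_2)$.

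For this, Lemma \ref{lem:eq_coeq_norm} gives $\eq(\varphi_1\wedge\varphi_2)\subseteq\eq(\norm(\varphi_1\wedge\varphi_2,\psi))$. Then Lemma \ref{lem:eq_coeq}, applied to the formula $\norm(\varphi_1\wedge\varphi_2,\psi)$, yields $||\norm(\varphi_1\wedge\varphi_2,\psi)||\subseteq\D_{xy}$ for each such pair. Intersecting over $i=1,\ldots,n$ gives
\[
||\norm(\varphi_1\wedge\varphi_2,\psi)||\subseteq\D_{x_1y_1}\cap\cdots\cap\D_{x_ny_n}=||\varphi_1\approx_\wedge\varphi_2||.
\]

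The main point to check is that Lemma \ref{lem:eq_coeq_norm} really applies to the conjunction as a whole, including the subcases where one or both normalized conjuncts are negated. In the case where both $\alpha_1$ and $\alpha_2$ are negated, that lemma only controls $\coeq$. However, by Lemma \ref{lem:pos_neg_coeq_eq}, $\varphi_1\wedge\varphi_2$ is then negative and $\eq(\varphi_1\wedge\varphi_2)=\id_\X$. The set of pairs is therefore empty, $\varphi_1\approx_\wedge\varphi_2=1$, and the claim is trivial. In the mixed subcases the proof of Lemma \ref{lem:eq_coeq_norm} explicitly establishes the $\eq$ inclusion, so the argument above goes through.
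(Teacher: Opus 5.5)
Your proof is correct and follows the paper's argument exactly. Each pair of the minimal representation lies in $\eq(\varphi_1\wedge\varphi_2)$, Lemma~\ref{lem:eq_coeq_norm} transfers it to $\eq(\norm(\varphi_1\wedge\varphi_2,\psi))$, and Lemma~\ref{lem:eq_coeq} gives the inclusion into the corresponding diagonal.

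Your extra check of the doubly negated subcase is also sound, with one small correction. The negativity of $\varphi_1\wedge\varphi_2$ there does not come from Lemma~\ref{lem:pos_neg_coeq_eq}; it comes from the paper's observation that $\varphi$ is positive iff $\norm(\varphi,\psi)$ is. Lemma~\ref{lem:pos_neg_coeq_eq} then gives $\eq(\varphi_1\wedge\varphi_2)=\id_\X$.
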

\begin{proof}
Suppose that $\varphi_1\approx_\wedge\varphi_2=(x_1\approx x_2)\wedge\cdots\wedge(x_n\approx x_n)$. By the definition of $\varphi_1\approx_\wedge\varphi_2$, we have $\tu{x_i,y_i}\in\eq(\varphi_1\wedge\varphi_2)$ for all $1\leq i \leq n$. It follows from Lemma \ref{lem:eq_coeq_norm} that $\tu{x_i,y_i}\in\eq(\norm(\varphi_1\wedge\varphi_2,\psi))$ for all $1\leq i \leq n$. 
We have, by Lemma \ref{lem:eq_coeq}, 
\begin{align*}
||\norm(\varphi_1\wedge\varphi_2,\psi)||\subseteq\D_{x_1y_1}\cap\cdots\cap\D_{x_ny_n}=||\varphi_1\approx_\wedge\varphi_2||.  
\end{align*}
\qedhere
\end{proof}
The normalization of a formula always remains within the space described by its generator.
\begin{lem}\label{lem:norm1}
It holds that 
$$||\norm(\varphi,\psi)||\subseteq||\gen(\varphi)||$$ 
and 
$$\overline{||\norm(\varphi,\psi)||}\subseteq||\cogen(\varphi)||.$$ 
\end{lem}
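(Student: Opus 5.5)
We prove both inclusions simultaneously by structural induction on $\varphi$, for an arbitrary normalized $\psi$ with $\FV(\psi)=\FV(\varphi)$. The workhorses will be Theorem~\ref{thm:gen}, the containment \eqref{eq:align_subseteq}, Lemma~\ref{lem:norm_eq_wedge}, and Lemma~\ref{lem:pos_neg_cogen_gen}. The last of these lets us dismiss one of the two inclusions in many cases: if $\norm(\varphi,\psi)$ is not negated, then $\varphi$ is positive, so $||\cogen(\varphi)||=\V$ and the second inclusion is trivial. Symmetrically, if $\norm(\varphi,\psi)$ is negated, the first inclusion is trivial.

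\emph{Atomic cases.} For $\varphi=r(x_1,\ldots,x_n)$, Lemma~\ref{lem:rename} gives $||\norm(\varphi,\psi)||=||r(x_1,\ldots,x_n)||=||\gen(\varphi)||$, and $\cogen(\varphi)=1$. For $\varphi=(x_1\approx x_2)$ and $\varphi=1$, both the generator and the cogenerator are $1$, so there is nothing to prove.

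\emph{Negation.} Here $||\norm(\neg\varphi_1,\psi)||=\overline{||\norm(\varphi_1,\psi)||}$. The induction hypothesis for $\varphi_1$ then yields $\overline{||\norm(\varphi_1,\psi)||}\subseteq||\cogen(\varphi_1)||=||\gen(\neg\varphi_1)||$, and dually for the complement.

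\emph{Existential quantifier.} For $\varphi=(\exists x)\varphi_1$ we have $||\norm(\varphi,\psi)||=\C_x(||\norm(\varphi_1,\psi')||)$. Since $\C_x$ is monotone, the induction hypothesis gives the inclusion in $\C_x(||\gen(\varphi_1)||)=||\gen(\varphi)||$. For the complement we reuse the chain from the proof of Theorem~\ref{thm:gen}, based on Lemma~\ref{lem:Cx_overline}:
\[\overline{\C_x(V)}\subseteq\overline V\subseteq\C_x(\overline V)\subseteq\C_x(||\cogen(\varphi_1)||).\]

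\emph{Conjunction.} Let $\varphi=\varphi_1\wedge\varphi_2$. By the induction hypothesis, $||\alpha_i||\subseteq||\gen(\varphi_i)||$ and $\overline{||\alpha_i||}\subseteq||\cogen(\varphi_i)||$. We split into the four subcases.
\begin{enumerate}
\item In subcase (a), $||\alpha_1\wedge\alpha_2||\subseteq||\gen(\varphi_1)||\cap||\gen(\varphi_2)||$. Lemma~\ref{lem:norm_eq_wedge} then adds $||\varphi_1\approx_\wedge\varphi_2||$ to the intersection, which gives $||\gen(\varphi_1\wedge\varphi_2)||$.
\item In subcase (b), \eqref{eq:align_subseteq} gives $||\norm(\varphi,\psi)||\subseteq||\alpha_1||\subseteq||\gen(\varphi_1)||$. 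Since $\alpha_2$ is negated, $\varphi_2$ is negative and $||\gen(\varphi_2)||=\V$, so $\gen(\varphi_1)\wedge\gen(\varphi_2)$ contributes only $||\gen(\varphi_1)||$. Intersecting with $||\varphi_1\approx_\wedge\varphi_2||$ via Lemma~\ref{lem:norm_eq_wedge} finishes this subcase. Subcase (c) is symmetric.
\item In subcase (d) the formula is negated, so only the cogenerator inclusion remains. The complement is $||\al(\overline{\alpha_1},\psi)||\cup||\al(\overline{\alpha_2},\psi)||$, which by \eqref{eq:align_subseteq} is contained in $\overline{||\alpha_1||}\cup\overline{||\alpha_2||}\subseteq||\cogen(\varphi_1)||\cup||\cogen(\varphi_2)||$. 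Since each $\C$ is extensive, this lies in the $\vee^*$-disjunction, i.e.\ in $||\cogen(\varphi)||$.
\end{enumerate}

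The main obstacle is the conjunction case. One has to check that the $\approx_\wedge$ conjunct of $\gen(\varphi_1\wedge\varphi_2)$ is really satisfied by the normalized formula; this is exactly what Lemma~\ref{lem:norm_eq_wedge} (and through it Lemma~\ref{lem:eq_coeq_norm}) supplies. One must also make sure that the alignment conjuncts only shrink values. Everything else is routine monotonicity.
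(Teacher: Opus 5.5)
Your proposal follows the paper's proof: the same structural induction, the same four conjunction subcases, the same auxiliary lemmas, and the same discharge of one inclusion per subcase (tacit in the paper) via the paper's observation that $\norm(\varphi,\psi)$ is negated exactly when $\varphi$ is negative, combined with Lemma~\ref{lem:pos_neg_cogen_gen}. In subcase (b) you are more careful than the paper, which asserts $||\gen(\varphi_1)||=||\gen(\varphi_1)||\cap||\gen(\varphi_2)||\cap||\varphi_1\approx_\wedge\varphi_2||$; this equality fails for $\varphi_1=(x\approx y)$, $\varphi_2=\neg r(x)$, where $\gen(\varphi_1)=1$ but $\varphi_1\approx_\wedge\varphi_2$ is $x\approx y$, so obtaining the $\approx_\wedge$ conjunct from Lemma~\ref{lem:norm_eq_wedge}, as you do, is the correct justification.
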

\begin{proof}
We prove the assertion by structural induction on the formula $\varphi$.
\begin{enumerate}
\item Consider the case in which $\varphi$ is of the form $r(x_1,\ldots,x_n)$.  Then we have\\ $||\norm(r(x_1,\ldots,x_n),\psi)||=||r(y_1/x_1,\ldots,y_n/x_n)||$, where $r(y_1,\ldots,y_n)\in\F$, by Lemma \ref{lem:rename}, $||r(y_1/x_1,\ldots,y_n/x_n)||=||r(x_1,\ldots,x_n)||$. Now, by Lemma \ref{thm:gen}, $||r(x_1,\ldots,x_n)||\subseteq||\gen(r(x_1,\ldots,x_n))||$. Clearly, $\overline{||\norm(r(x_1,\ldots,x_n),\psi)||}\subseteq||\cogen(r(x_1,\ldots,x_n))||=\V$.
\item Consider the case in which $\varphi$ is of the form $x_1\approx x_2$. Then, clearly,  $||\norm(x_1\approx x_2,\psi)||\subseteq\V=||\gen(x_1\approx x_2)||$ and, similarly, $\overline{||\norm(x_1\approx x_2,\psi)||}\subseteq\V=||\cogen(x_1\approx x_2)||$.

\item Consider the case in which $\varphi$ is of the form $\neg\varphi_1$. Since $||\cogen(\varphi_1)||=||\gen(\neg\varphi_1)||\subseteq||\chi||$, 

We have,  by the induction hypothesis, 
\begin{align*}
||\norm(\neg\varphi_1,\psi)||=\overline{||\norm(\varphi_1,\psi)||}\subseteq||\cogen(\varphi_1)||=||\gen(\neg\varphi_1)||. 
\end{align*}
Similarly, we have 
\begin{align*}
\overline{||\norm(\neg\varphi_1,\psi)||}=\overline{\overline{||\norm(\varphi_1,\psi)||}}=||\norm(\varphi_1,\psi)||\subseteq||\gen(\varphi_1)||=||\cogen(\neg\varphi_1)||. 
\end{align*}
\item Consider the case in which $\varphi$ is of the form $\varphi_1\wedge\varphi_2$. Let $\alpha_1=\norm(\varphi_1,\psi_1)$ and $\alpha_2=\norm(\varphi_2,\psi_2)$, where $\psi_1=(\exists\FV(\psi)\setminus\FV(\varphi_1))\psi$ and $\psi_2=(\exists\FV(\psi)\setminus\FV(\varphi_2))\psi$. We distinguish four subcases.
\begin{enumerate}
\item Consider the subcase in which both $\alpha_1$ and $\alpha_2$ are not negated. We have, by Lemma \ref{lem:norm_eq_wedge}, 
\begin{align*}
||\norm(\varphi_1\wedge\varphi_2,\psi)||&= ||\norm(\varphi_1\wedge\varphi_2,\psi)||\cap||\varphi_1\approx_\wedge\varphi_2||\\
&=||\alpha_1||\cap||\alpha_2||\cap||\varphi_1\approx_\wedge\varphi_2||\\
&\subseteq||\gen(\varphi_1)||\cap||\gen(\varphi_2)||\cap||\varphi_1\approx_\wedge\varphi_2||\\
&=||\gen(\varphi_1\wedge\varphi_2||).
\end{align*}
 
\item Consider the subcase in which $\alpha_1$ is not negated and $\alpha_2$ is negated.
We have
\begin{align}
&||\norm(\varphi_1\wedge\varphi_2,\psi)||=||\al(\alpha_1,\psi)||\cap\overline{||\al(\overline{\alpha_2},\psi)||}\notag\\
\subseteq&||\al(\alpha_1,\psi)||\subseteq||\alpha_1||\subseteq||\gen(\varphi_1)||\notag\\
= &||\gen(\varphi_1)||\cap||\gen(\varphi_2)||\cap||\varphi_1\approx_\wedge\varphi_2||\label{eq:lem_norm1_eq1}\\
=&||\gen(\varphi_1\wedge\varphi_2||).\notag
\end{align}
The equality \eqref{eq:lem_norm1_eq1} follows from Lemmas \ref{lem:pos_neg_coeq_eq} and \ref{lem:pos_neg_cogen_gen}.
 
\item The subcase in which $\alpha_1$ is negative and $\alpha_2$ is positive can be proved analogously to the previous subcase.
\item Consider the subcase in which both $\alpha_1$ and $\alpha_2$ are negated. We have, by  \eqref{eq:align_subseteq} and by the induction hypothesis,
\begin{align*}
||\al(\overline{\alpha_1},\psi)||\subseteq||\overline{\alpha_1}||=\overline{||\alpha_1||}\subseteq||\cogen(\varphi_1)||
\end{align*}
and
\begin{align*}
||\al(\overline{\alpha_2},\psi)||\subseteq||\overline{\alpha_2}||=\overline{||\alpha_2||}\subseteq||\cogen(\varphi_2)||.
\end{align*}
Now, we have
\begin{align*}
\overline{||\norm(\varphi_1\wedge\varphi_2,\psi)||}&=||\al(\overline{\alpha_1},\psi)||\cup||\al(\overline{\alpha_2},\psi)||\\
 &\subseteq||\cogen(\varphi_1)||\cup||\cogen(\varphi_2)||\\
 &\subseteq\C_{X_1\setminus X_2}(||\cogen(\varphi_1)||)\cup\C_{X_2\setminus X_1}(||\cogen(\varphi_2)||)\\
 &=||\cogen(\varphi_1\wedge\varphi_2)||,
\end{align*}
where $X_1=\FV(\cogen(\varphi_1))$ and $X_2=\FV(\cogen(\varphi_2))$. 
\end{enumerate}
\item Consider the case when $\varphi$ is of the form $(\exists x)\varphi_1$. Let $\psi_1=\gen(\varphi_1)$ and $X=\FV(\psi_1)\setminus\{x\}$. From $||\norm(\varphi_1,(\exists X)\psi_1\wedge\psi)||\subseteq||\gen(\varphi_1)||$,
it follows
\begin{align*}
||\norm((\exists x)\varphi_1,\psi)||&=\C_x(||\norm(\varphi_1,(\exists X)\psi_1\wedge\psi)||)\subseteq\C_x(||\gen(\varphi_1)||)\\
&=||\gen((\exists x)\varphi_1)||.
\end{align*}
From the induction hypothesis $||\overline{\norm(\varphi_1,(\exists X)\psi_1\wedge\psi)||}\subseteq||\cogen(\varphi_1)||$, it follows
\begin{align}
\overline{||\norm((\exists x)\varphi_1,\psi)||}&=\overline{\C_x(||\norm(\varphi_1,(\exists X)\psi_1\wedge\psi)||)}\notag\\
&\subseteq\C_x(\overline{||\norm(\varphi_1,(\exists X)\psi_1\wedge\psi)||})\label{eq:lem_norm1_eq2}\\
&\subseteq\C_x(||\cogen(\varphi_1)||)=||\cogen((\exists x)\varphi_1)||\notag.
\end{align}
The inclusion~\eqref{eq:lem_norm1_eq2} follows from item~(3) of Lemma~\ref{lem:Cx_overline}. \qedhere

\end{enumerate} 
\end{proof}
The normalization of an allowed formula $\varphi$ using a formula $\psi$ has the same value as the original formula $\varphi$ within the value of $\psi$.   
\begin{lem}\label{lem:norm2}
Assume that $\varphi$ is allowed and $\FV(\psi)=\FV(\varphi)$. Then it holds
$$||\norm(\varphi,\psi)||\cap||\psi||=||\varphi||\cap||\psi||.$$ 
\end{lem}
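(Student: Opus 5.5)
The proof is by structural induction on $\varphi$, but the hypothesis ``$\varphi$ is allowed'' is not inherited by subformulas. For instance, $\FV(\varphi)=\genz(\varphi)$ fails for $\neg r(x)$ inside $r(x)\wedge\neg s(x)$. I would therefore prove a stronger claim. Call $\varphi$ \emph{quantifier-allowed} if every subformula $(\exists x)\chi$ of $\varphi$ satisfies $x\in\genz(\chi)$; this property does pass to subformulas. The claim to prove is: for every quantifier-allowed $\varphi$ and every normalized $\psi$ with $\FV(\psi)=\FV(\varphi)$,
$$||\norm(\varphi,\psi)||\cap||\psi||=||\varphi||\cap||\psi|| \quad\text{and}\quad \FV(\norm(\varphi,\psi))=\FV(\varphi).$$
The second part is needed so that the alignments $\al(\cdot,\psi)$ in the conjunction rule are defined and Lemma~\ref{lem:align} applies. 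It is checked rule by rule, using $\FV(\gen(\varphi_1))=\genz(\varphi_1)$ in the quantifier case.

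\emph{Atomic and negation cases.} For a relation formula the claim is Lemma~\ref{lem:rename}. For $x_1\approx x_2$ we get $||\psi||\cap\D_{x_1x_2}\cap||\psi||=||x_1\approx x_2||\cap||\psi||$. For $\neg\varphi_1$ we have $||\overline{\norm(\varphi_1,\psi)}||=\overline{||\norm(\varphi_1,\psi)||}$. Hence the left side equals $||\psi||\setminus(||\norm(\varphi_1,\psi)||\cap||\psi||)$, which by induction is $||\psi||\setminus(||\varphi_1||\cap||\psi||)=||\neg\varphi_1||\cap||\psi||$.

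\emph{Conjunction case.} Here $\psi_i=(\exists\FV(\psi)\setminus\FV(\varphi_i))\psi$, so $||\psi||\subseteq||\psi_i||$. Intersecting the induction hypothesis for $(\varphi_i,\psi_i)$ with $||\psi||$ gives $||\alpha_i||\cap||\psi||=||\varphi_i||\cap||\psi||$, and hence also $\overline{||\alpha_i||}\cap||\psi||=\overline{||\varphi_i||}\cap||\psi||$.
\begin{enumerate}
\item In subcase (a) the claim is immediate.
\item In subcase (b), Lemma~\ref{lem:align} gives $||\al(\alpha_1,\psi)||\cap||\psi||=||\alpha_1||\cap||\psi||$. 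It also gives $||\psi||\setminus||\al(\overline{\alpha_2},\psi)||=||\psi||\setminus(\overline{||\alpha_2||}\cap||\psi||)=||\alpha_2||\cap||\psi||$. Combining, the left side is $||\alpha_1||\cap||\alpha_2||\cap||\psi||=||\varphi_1\wedge\varphi_2||\cap||\psi||$.
\item Subcase (c) is symmetric to (b).
\item Subcase (d) is the same computation applied to the complement of the disjunction.
\end{enumerate}

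\emph{Quantifier case} (the main obstacle). Let $\varphi=(\exists x)\varphi_1$, and put $\chi=(\exists X)\gen(\varphi_1)\wedge\psi$ with $X=\genz(\varphi_1)\setminus\{x\}$.
\begin{enumerate}
\item Since $x\in\genz(\varphi_1)=\FV(\gen(\varphi_1))$, we get $\FV(\chi)=\{x\}\cup\FV(\psi)=\FV(\varphi_1)$. So the induction hypothesis applies and gives $||\norm(\varphi_1,\chi)||\cap||\chi||=||\varphi_1||\cap||\chi||$.
\item Because $x\notin\FV(\psi)$, we have $\C_x(||\psi||)=||\psi||$. Axiom~4 of cylindric algebras then yields $\C_x(A)\cap||\psi||=\C_x(A\cap||\psi||)$ for any $A$.
\item Lemma~\ref{lem:norm1} gives $||\norm(\varphi_1,\chi)||\subseteq||\gen(\varphi_1)||\subseteq\C_X(||\gen(\varphi_1)||)$. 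Theorem~\ref{thm:gen} gives $||\varphi_1||\subseteq\C_X(||\gen(\varphi_1)||)$. Hence intersecting either set with $||\psi||$ is the same as intersecting it with $||\chi||$.
\item Chaining these facts:
$$||\norm(\varphi,\psi)||\cap||\psi||=\C_x(||\norm(\varphi_1,\chi)||\cap||\chi||)=\C_x(||\varphi_1||\cap||\chi||)=\C_x(||\varphi_1||\cap||\psi||)=||\varphi||\cap||\psi||.$$
\end{enumerate}
The delicate points are exactly here: the commutation in step~2, and the fact that $\chi$ has the right free variables. The latter is the only place where the allowedness condition on quantifiers is used.
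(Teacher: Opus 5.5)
Your proof is correct and is essentially the paper's argument. It uses the same structural induction and case split, with Lemma~\ref{lem:rename}, Lemma~\ref{lem:align}, Lemma~\ref{lem:norm1}, Theorem~\ref{thm:gen} and the cylindric axioms carrying the quantifier case. Your strengthening to quantifier-allowed formulas with the invariant $\FV(\norm(\varphi,\psi))=\FV(\varphi)$, and your use of the actual $\psi_i$ in the conjunction case (the paper's proof writes $\norm(\varphi_i,\psi)$ there), make explicit what the paper's induction silently needs, since allowedness is not inherited by subformulas.

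One small caveat concerns the identity $\FV(\gen(\varphi_1))=\genz(\varphi_1)$, which you borrow from the paper. It can fail: for $r(x)\wedge(y\approx z)$ the $\approx_\wedge$ conjunct contributes $y\approx z$. So $X$ should be $\FV(\gen(\varphi_1))\setminus\{x\}$, as in the definition of $\norm$. This does not affect your argument, because your quantifier step only needs $x\in\genz(\varphi_1)\subseteq\FV(\gen(\varphi_1))\subseteq\FV(\varphi_1)$, and that does hold.
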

\begin{proof} We prove the assertion by structural induction on the formula $\varphi$.
\begin{enumerate}
\item Consider the case when $\varphi$ is of the form $r(x_1,\ldots,x_n)$. Let $y_1,\ldots,y_n$ be variables such that $r(y_1,\ldots,y_n)\in\F$. We have
\begin{align}
 ||\norm(\varphi,\psi)||\cap||\psi||&=||\norm(r(x_1,\ldots,x_n),\psi)||\cap||\psi||\notag\\
 &=||r(y_1/x_1,\ldots,y_n/x_n)||\cap||\psi||\notag\\
 &=||r(x_1,\ldots,x_n)||\cap||\psi||=||\varphi||\cap||\psi||.\label{eq:lem_norm2_2}
\end{align}
The equality \eqref{eq:lem_norm2_2} follows from Lemma \ref{lem:rename}.
\item Consider the case when $\varphi$ is of the form $x\approx y$. Let $X=\FV(\psi)\setminus\{x,y\}$ We have
\begin{align*}
||\norm(\varphi,\psi)||\cap||\psi||&=||(\exists X)\psi\wedge(x\approx y)||\cap||\psi||\\
&=\C_X(||\psi||)\cap||x\approx y||\cap||\psi||\\
&=||x\approx y||\cap||\psi||=||\varphi||\cap||\psi||.
\end{align*}
\item Consider the case when $\varphi$ is of the form $\neg\varphi_1$. We have
\begin{align}
||\norm(\varphi,\psi)||\cap||\psi||&=||\norm(\neg\varphi_1,\psi)||\cap||\psi||\notag\\
&=||\overline{\norm(\varphi_1,\psi)}||\cap||\psi||\notag\\
&=\overline{||\norm(\varphi_1,\psi)||}\cap||\psi||\notag\\
&=\overline{||\norm(\varphi_1,\psi)||\cap||\psi||}\cap||\psi||\notag\\
&=\overline{||\varphi_1||\cap||\psi||}\cap||\psi||=\overline{||\varphi_1||}\cap||\psi||\notag\\
&=||\neg\varphi_1||\cap||\psi||=||\varphi||\cap||\psi||.\notag
\end{align} 
\item Consider the case when $\varphi$ is of the form $\varphi_1\wedge\varphi_2$. Let $\alpha_1=\norm(\varphi_1,\psi)$ and $\alpha_2=\norm(\varphi_2,\psi)$. We distinguish four subcases.
\begin{enumerate}
\item Consider the subcase when both $\alpha_1$ and $\alpha_2$ are positive. We have
\begin{align*}
||\norm(\varphi,\psi)||\cap||\psi||&= ||\norm(\varphi_1\wedge\varphi_2,\psi)||\cap||\psi||\\
&=||\norm(\varphi_1,\psi)\wedge\norm(\varphi_2,\psi)||\cap||\psi||\\
&=||\norm(\varphi_1,\psi)||\cap||\norm(\varphi_2,\psi)||\cap||\psi||\\
&=||\varphi_1||\cap||\varphi_2||\cap||\psi||\\
&=||\varphi_1\wedge\varphi_2||\cap||\psi||=||\varphi||\cap||\psi||.
\end{align*} 
\item Consider the subcase when $\alpha_1$ is positive and $\alpha_2$ is negative. Let $X=\FV(\alpha_1)\cup\FV(\alpha_2)$. We have
\begin{align}
&||\norm(\varphi,\psi)||\cap||\psi||=||\norm(\varphi_1\wedge\varphi_2)||\cap||\psi||\notag\\
&=||\al(\alpha_1,X,\psi)||\cap\overline{||\al(\overline{\alpha_2},X,\psi)||} \cap||\psi||\notag\\
&=||\alpha_1||\cap\overline{||\al(\overline{\alpha_2},X,\psi)||\cap||\psi||} \cap||\psi||\label{eq:lem_norm2_4}\\
&=||\alpha_1||\cap\overline{||\overline{\alpha_2}||\cap||\psi||} \cap||\psi||\label{eq:lem_norm2_5}\\
&=||\alpha_1||\cap\overline{||\overline{\alpha_2}||} \cap||\psi||\notag\\
&=||\alpha_1||\cap\overline{\overline{||\alpha_2||}} \cap||\psi||\notag\\
&=||\alpha_1||\cap||\alpha_2|| \cap||\psi||\notag\\
&=||\varphi_1||\cap||\varphi_2|| \cap||\psi||=||\varphi||\cap||\psi||.\notag
\end{align}
The equalities \eqref{eq:lem_norm2_4} and \eqref{eq:lem_norm2_5} follows from Lemma \ref{lem:align}. 
\item The subcase in which $\alpha_1$ is negative and $\alpha_2$ is positive can be proved analogously to the previous subcase.
\item Consider the subcase where both $\alpha_1$ and $\alpha_2$ are negative.  Let $X=\FV(\alpha_1)\cup\FV(\alpha_2)$.  We have
\begin{align}
&||\norm(\varphi,\psi)||\cap||\psi||=||\norm(\varphi_1\wedge\varphi_2)||\cap||\psi||\notag\\
&=\overline{||\al(\overline{\alpha_1},X,\psi)||\cup||\al(\overline{\alpha_2},X,\psi)||} \cap||\psi||\notag\\
&=\overline{||\al(\overline{\alpha_1},X,\psi)||\cap||\psi||}\cap\overline{||\al(\overline{\alpha_2},X,\psi)||\cap||\psi||} \cap||\psi||\notag\\
&=\overline{||\overline{\alpha_1}||}\cap\overline{||\overline{\alpha_2}||} \cap||\psi||=||\alpha_1||\cap||\alpha_2||\cap||\psi||\notag\\
&=||\varphi_1||\cap||\varphi_2|| \cap||\psi||=||\varphi||\cap||\psi||.\notag
\end{align}
Since this subcase is similar to subcase (b), we can skip some steps. 
\end{enumerate}
  
\item Consider the case when $\varphi$ is of the form $(\exists x)\varphi_1$. Let $\psi_1=\gen(\varphi_1)$ and $X=\FV(\psi_1)\setminus\{x\}$.  We have
\begin{align}
&||\varphi||\cap||\psi||=||(\exists x)\varphi_1||\cap||\psi||=\C_x(||\varphi_1||)\cap||\psi||\notag\\
&=\C_x(||\varphi_1||\cap||\psi_1||)\cap\C_x(||\psi||)\cap||\psi||\label{eq:lem_norm2_9}\\
&=\C_x(||\varphi_1||\cap\C_{X}(||\psi_1||)\cap\C_x(||\psi||))\cap||\psi||\label{eq:lem_norm2_7}\\
&=\C_x(||\varphi_1||\cap\C_{X}(||\psi_1||)\cap||\psi||)\cap||\psi||\label{eq:lem_norm2_11}\\
&=\C_x(||\varphi_1||\cap||(\exists X)\psi_1\wedge\psi||)\cap||\psi||\notag\\
&=\C_x(||\norm(\varphi_1,(\exists X)\psi_1\wedge\psi)||\cap||(\exists X)\psi_1\wedge\psi||)\cap||\psi||\label{eq:lem_norm2_13}\\
&=\C_x(||\norm(\varphi_1,(\exists X)\psi_1\wedge\psi)||\cap\C_X(||\psi_1||)\cap||\psi||)\cap||\psi||\notag\\
&=\C_x(||\norm(\varphi_1,(\exists X)\psi_1\wedge\psi)||\cap\C_X(||\psi_1||)\cap\C_x(||\psi||))\cap||\psi||\label{eq:lem_norm2_12}\\
&=\C_x(||\norm(\varphi_1,(\exists X)\psi_1\wedge\psi)||\cap\C_X(||\psi_1||))\cap\C_x(||\psi||)\cap||\psi||\label{eq:lem_norm2_8}\\
&=\C_x(||\norm(\varphi_1,(\exists X)\psi_1\wedge\psi)||\cap\C_X(||\psi_1||))\cap||\psi||\label{eq:lem_norm2_10}\\
&=\C_x(||\norm(\varphi_1,(\exists X)\psi_1\wedge\psi)||)\cap||\psi||\label{eq:lem_norm2_1}\\
&=||(\exists x)\norm(\varphi_1,(\exists X)\psi_1\wedge\psi)||\cap||\psi||\notag\\
&=||\norm((\exists x)\varphi_1,\psi)||\cap||\psi||\notag\\
&=||\norm(\varphi,\psi)||\cap||\psi||\notag
\end{align}
The equality \eqref{eq:lem_norm2_9} holds due to Lemma \ref{thm:gen}.
The third axiom of cylindric algebras is used in the equalities \eqref{eq:lem_norm2_9} and \eqref{eq:lem_norm2_10}.
The fourth axiom of cylindric algebras is used in the equalities \eqref{eq:lem_norm2_7} and \eqref{eq:lem_norm2_8}.
The equality \eqref{eq:lem_norm2_1}  follows from Lemmas \ref{lem:norm1} and \ref{thm:gen}. The equality \eqref{eq:lem_norm2_13} holds by the induction hypothesis.

Since $\FV(\varphi)=\FV(\psi)$, $x\notin\FV(\psi)$, and thus $\C_x(||\psi||)=||\psi||$. This fact is used in equalities \eqref{eq:lem_norm2_11} and \eqref{eq:lem_norm2_12}. \qedhere
 
\end{enumerate}

\end{proof}

We conclude our study by presenting two main results concerning allowed formulas.

\begin{thm}\label{thm:norm3}
Every allowed formula is semantically equivalent to its normalisation. 
\end{thm}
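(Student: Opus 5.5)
We must show that $||\norm(\varphi,\gen(\varphi))||=||\varphi||$ for every allowed formula $\varphi$ and every structure $M$. The plan is to combine Lemma~\ref{lem:norm2}, applied with $\psi=\gen(\varphi)$, with the containments of Theorem~\ref{thm:gen} and Lemma~\ref{lem:norm1}. Both sides will be shown to lie inside $||\gen(\varphi)||$. Once that is known, the intersections with $||\gen(\varphi)||$ appearing in Lemma~\ref{lem:norm2} can be dropped.

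First we check that the hypotheses of Lemma~\ref{lem:norm2} hold for $\psi=\gen(\varphi)$. Since $\varphi$ is allowed, $\FV(\varphi)=\genz(\varphi)$. The remark after the definition of generators gives $\FV(\gen(\varphi))=\genz(\varphi)$, hence $\FV(\gen(\varphi))=\FV(\varphi)$. The normalisation $\norm(\varphi,\gen(\varphi))$ also requires $\gen(\varphi)$ to be normalized. We would verify this by a short induction on $\varphi$ following the rules for $\gen$. The potentially problematic cases are the connective $\vee^*$ in the cogenerator clause and the equalities $x\approx y$ contributed by $\varphi_1\approx_\wedge\varphi_2$.
\begin{itemize}
\item $\vee^*$ quantifies away the differing variables, so both disjuncts have the same free variables, as normalization requires.
\item For the equalities, we argue that each pair in the minimal representation involves at least one variable already free in $\gen(\varphi_1)\wedge\gen(\varphi_2)$. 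This would follow from allowedness, since the variables are generated.
\end{itemize}
If this induction turns out to be delicate, it is in any case implicitly assumed by the definition of normalisation, and we would note it as such.

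With the hypotheses in place, Lemma~\ref{lem:norm2} gives
\[
||\norm(\varphi,\gen(\varphi))||\cap||\gen(\varphi)||=||\varphi||\cap||\gen(\varphi)||.
\]
By Theorem~\ref{thm:gen}, $||\varphi||\subseteq||\gen(\varphi)||$, so the right-hand side equals $||\varphi||$. By Lemma~\ref{lem:norm1}, $||\norm(\varphi,\gen(\varphi))||\subseteq||\gen(\varphi)||$, so the left-hand side equals $||\norm(\varphi,\gen(\varphi))||$. Therefore $||\norm(\varphi,\gen(\varphi))||=||\varphi||$ for every structure, which is the claimed semantic equivalence.

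The main obstacle is not this final combination, which is immediate. It is confirming that the side conditions hold: that $\gen(\varphi)$ has exactly the free variables of $\varphi$ and is normalized. We would handle this by the structural induction sketched above, invoking the allowedness condition at existential subformulas. There the condition $x\in\genz(\varphi_1)$ guarantees that $\FV((\exists X)\psi_1\wedge\psi)=\FV(\varphi_1)$ in the recursive call, which keeps the inductive uses of Lemma~\ref{lem:norm2} legitimate.
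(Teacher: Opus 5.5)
Your argument is the paper's: apply Lemma~\ref{lem:norm2} with $\psi=\gen(\varphi)$, which is legitimate because allowedness gives $\FV(\gen(\varphi))=\FV(\varphi)$, and use Theorem~\ref{thm:gen} to reduce the right-hand side to $||\varphi||$. You also make explicit the appeal to Lemma~\ref{lem:norm1} needed to drop the intersection on the left; the written proof stops at $||\norm(\varphi,\gen(\varphi))||\cap||\gen(\varphi)||=||\varphi||$ and leaves that step implicit.

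The detour about $\gen(\varphi)$ being normalized is unnecessary for this semantic statement, since neither Lemma~\ref{lem:norm1} nor Lemma~\ref{lem:norm2} uses that hypothesis. It is also not true in general: already $\gen(r(x_1,\ldots,x_n))=r(x_1,\ldots,x_n)$ lies in $\F$ only for the designated variables, which is exactly why $\norm$ renames atoms via $r(y_1/x_1,\ldots,y_n/x_n)$.
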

\begin{proof}
Let $\varphi$ be an allowed formula. Since $\varphi$ is allowed, we have $\FV(\varphi)=\FV(\gen(\varphi))$. By Theorem \ref{thm:gen}, it follows that  $||\varphi||\subseteq||\gen(\varphi)||$, and, by Lemma \ref{lem:norm2}, $||\norm(\varphi,\gen(\varphi))||\cap||\gen(\varphi)||=||\varphi||\cap||\gen(\varphi)||=||\varphi||$.
\end{proof}

\begin{thm}
Allowed formulas are domain independent. 
\end{thm}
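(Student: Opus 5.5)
The plan is to chain together three earlier results: Theorem~\ref{thm:norm3}, Lemma~\ref{lem:norm_form}, and Theorem~\ref{thm:dom_idep}. Let $\varphi$ be an allowed formula, and set $\chi=\norm(\varphi,\gen(\varphi))$, its normalisation. Since the paper defines allowedness by $\FV(\varphi)=\genz(\varphi)$, we also have $\FV(\gen(\varphi))=\genz(\varphi)=\FV(\varphi)$. Hence $\gen(\varphi)$ has the right free variables to serve as the second argument of $\norm$, so $\chi$ is well defined.

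\textbf{Step 1.} We first need $\gen(\varphi)$ itself to be a normalized formula, because the definition of $\norm(\varphi,\psi)$ requires $\psi$ to be normalized. I would check this by structural induction on the rules for $\gen$ and $\cogen$, with three kinds of case:
\begin{itemize}
\item Relation atoms give $r(x_1,\ldots,x_n)$. This is normalized after renaming via Lemma~\ref{lem:rename}, or directly when it lies in $\F$.
\item Conjunctions give $\gen(\varphi_1)\wedge\gen(\varphi_2)\wedge(x_1\approx y_1)\wedge\cdots$. This is normalized by rules 5 and 3, provided each equated variable is free in the preceding conjunct.
\item The connective $\vee^*$ gives a disjunction of formulas that, after existential quantification, have equal free variables. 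This matches rule 4 of normalized formulas.
\end{itemize}

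\textbf{Step 2.} By the remark following the definition of $\norm$, $\chi$ is normalized. Lemma~\ref{lem:norm_form} then gives that $\chi$ is semantically equivalent to the relational expression $\expr(\chi)$.

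\textbf{Step 3.} By Theorem~\ref{thm:norm3}, $\|\varphi\|_M=\|\chi\|_M=\epsilon(\|\expr(\chi)\|_M)$ for every database $M$. So $\varphi$ is semantically equivalent to a relational expression, and Theorem~\ref{thm:dom_idep} yields that $\varphi$ is domain independent.

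The main obstacle is Step~1, which is where hidden side conditions sit. In the conjunction rule for $\gen$, an equality $x_i\approx y_i$ coming from $\varphi_1\approx_\wedge\varphi_2$ might involve variables not free in $\gen(\varphi_1)\wedge\gen(\varphi_2)$. Similarly, a $\cogen$ disjunction whose disjuncts are $1$ with differing free variables needs care. For an allowed formula, I would use the fact that every free variable of $\varphi$ lies in $\genz(\varphi)=\FV(\gen(\varphi))$, together with the closure $\cl_E$ in the definition of $\genz$. Together these should show that at least one endpoint of each added equality is free in the preceding conjunct, which is what rule 3 requires.

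If this check becomes too delicate, the fallback is to bypass normalization entirely and prove domain independence directly. The idea would be to show by induction that the value of $\varphi$, projected to $\FV(\varphi)$, lies in $\A(M)^{\FV(\varphi)}$, using the theorem on $\genz$ and $\cogenz$. We would then compare $M$ with $M_A$, using the quantifier condition $x\in\genz(\psi)$ to keep witnesses inside the active domain.
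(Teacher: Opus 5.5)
Your Steps 2 and 3 are exactly the paper's proof: Theorem~\ref{thm:norm3}, then Lemma~\ref{lem:norm_form}, then Theorem~\ref{thm:dom_idep}. The paper never checks your Step 1. It relies on the remark that $\norm(\varphi,\psi)$ is normalized ``by inspection''. That remark presupposes that $\gen(\varphi)$, and every $(\exists X)\gen(\varphi_1)\wedge\psi$ passed down by the quantifier rule, is normalized, since $\psi$ is copied verbatim into the output by the equality rule and by $\al$.

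So you found the right weak point, but your sketched induction cannot succeed. Allowedness constrains only $\varphi$ and its quantified subformulas, while $\gen$ recurses through all subformulas. Take the allowed formula $\varphi=(\exists z)\bigl(((x\approx y)\wedge t(z))\wedge r(x)\bigr)$. Its body has $\genz=\{x,y,z\}$, and $\genz(\varphi)=\{x,y\}=\FV(\varphi)$. For $\chi=(x\approx y)\wedge t(z)$ one gets $\gen(\chi)=(1\wedge t(z))\wedge(x\approx y)$, and neither $x$ nor $y$ is free in $1\wedge t(z)$. Here $\FV(\gen(\chi))=\{x,y,z\}\neq\{z\}=\genz(\chi)$, so the identity you planned to propagate fails one level down, although the paper states it for all formulas.

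Worse, $\gen(\chi)$ means $t(z)\wedge(x\approx y)$, which is not domain independent. By Lemma~\ref{lem:norm_form} and Theorem~\ref{thm:dom_idep}, no normalized formula is equivalent to it, so no local rewriting such as your renaming of atoms can help. It becomes harmless only in context, after the restricting conjunct $r(x)$ is moved before the equality. Under $\vee^*$ it is worse still: for the allowed $(r(x)\wedge s(y))\wedge\bigl(\chi\vee(r(x)\wedge s(y)\wedge t(z))\bigr)$, the restricting conjuncts must be distributed into the disjunction, so your third bullet fails too.

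Following the rules, $\norm(\varphi,\gen(\varphi))$ for the first example contains $(\exists z)\psi'\wedge(x\approx y)$ with $\gen(\varphi)$ a conjunct of $\psi'$. So it is not normalized, and Lemma~\ref{lem:norm_form} does not apply. Step 1 is therefore a genuine gap, and the paper's proof rests on the same unproved assumption. Rescuing this route needs a generator that is normalized by construction, with versions of Theorem~\ref{thm:gen} and Lemma~\ref{lem:norm1} for it.

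Two smaller points. Your opening claim $\FV(\gen(\varphi))=\FV(\varphi)$ survives for allowed $\varphi$, because $\genz(\varphi)\subseteq\FV(\gen(\varphi))\subseteq\FV(\varphi)$ always holds. Separately, the remark you cite in Step 2 also fails for negative allowed formulas, which are necessarily sentences such as $\neg(\exists x)r(x)$: there $\norm$ returns a negation, which no rule for normalized formulas produces. Conjoining with $1$ fixes this.

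Your fallback, by contrast, is sound and is the better route for this statement. By induction on subformulas, which inherit the quantifier condition, show that a valuation with values in $\A(M)$ lies in $||\psi||_M$ if and only if it lies in $||\psi||_{M_A}$. The only nontrivial case is $(\exists x)\theta$. There any witness $u\in||\theta||_M$ has $u(x)\in\A(M)$, because $x\in\genz(\theta)$ and the theorem on $\genz$ holds for all formulas.

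Since $\FV(\varphi)=\genz(\varphi)$, the same theorem puts $\pi_{\FV(\varphi)}(||\varphi||_M)$ inside $\A(M)^{\FV(\varphi)}$. Hence $\pi_{\FV(\varphi)}(||\varphi||_M)=\pi_{\FV(\varphi)}(||\varphi||_{M_A})$. Finally, $M_A$ is the same for structures differing only in domains; the degenerate case $\A(M)=\emptyset$ must be treated separately. This never produces a relational expression, but none is needed for this theorem.
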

\begin{proof}
Let $\varphi$ be an allowed formula. Then the normalization $\norm(\varphi,\gen(\varphi))$ is semantically equivalent to $\varphi$ (Theorem \ref{thm:norm3}) and is in normalized form. By Lemma~\ref{lem:norm_form}, the normalization $\norm(\varphi,\gen(\varphi))$ is semantically equivalent to the relational expression $\expr(\norm(\varphi,\gen(\varphi)))$. By Theorem~\ref{thm:dom_idep}, the normalization is domain independent, and thus $\varphi$ is also domain independent.
\end{proof}

\bibliographystyle{alphaurl}
\bibliography{references}

\end{document}